\newtheorem{theorem}{Theorem}[section]
\newtheorem{lemma}[theorem]{Lemma}
\newtheorem{proposition}[theorem]{Proposition}
\newtheorem{cor}[theorem]{Corollary}
\newtheorem{fact}[theorem]{Fact}
\newtheorem{definition}[theorem]{Definition}
\theoremstyle{remark}
\newcommand{\cut}[1]{}
\newcommand{\eps}{\varepsilon}
\def\N{\mathbb{N}}
\def\R{\mathbb{R}}
\def\d{\delta}
\def\e{\varepsilon}
\def\cB{\mathcal{B}}
\def\cK{\mathcal{K}}
\def\la{\langle}
\def\ra{\rangle}
\newcommand{\ip}[2]{{\langle #1,#2 \rangle}}
\def\st{\text{st}}
\def\sse{\subseteq}
\renewcommand{\bar}[1]{\overline{#1}}
\newcommand{\argmax}{\text{argmax}}
\def\blue{\color{blue}}
\renewcommand{\emptyset}{\varnothing}
\newcommand{\initOneLiners}{%
    \setlength{\itemsep}{0pt}
    \setlength{\parsep }{0pt}
    \setlength{\topsep }{0pt}
}
\newcommand{\CBC}{\textsf{CBC}\xspace}
\newcommand{\ang}{\rho}
\renewcommand{\d}{\textrm{d}}
\newcommand{\grad}{\nabla}
\title{Lipschitz Selectors may not Yield Competitive Algorithms for Convex Body Chasing}
\author{C.J. Argue, Anupam Gupta, and Marco Molinaro}
\date{\today}
\begin{document}
\maketitle

\begin{abstract}
  The current best algorithms for convex body chasing problem in
  online algorithms use the notion of the Steiner point of a convex
  set. In particular, the algorithm which always moves to the Steiner
  point of the request set is $O(d)$ competitive for nested convex
  body chasing, and this is optimal among memoryless algorithms
  \cite{bubeck2020chasing}. A memoryless algorithm coincides with the
  notion of a selector in functional analysis. The Steiner point is
  noted for being Lipschitz with respect to the Hausdorff metric, and
  for achieving the minimal Lipschitz constant possible. It is natural
  to ask whether every selector with this Lipschitz property yields a
  competitive algorithm for nested convex body chasing. We answer this
  question in the negative by exhibiting a selector which yields a
  non-competitive algorithm for nested convex body chasing but is
  Lipschitz with respect to Hausdorff distance. Furthermore, we show
  that being Lipschitz with respect to an $L_p$-type analog to the Hausdorff
  distance is sufficient to guarantee competitiveness if and only if $p=1$.
\end{abstract}

\section{Introduction}

In the convex body chasing (\CBC) problem, the player receives a sequence of nonempty closed convex sets $K_1,K_2, \dots, K_T\subseteq \R^d$ and must respond to each set $K_t$ with a point $x_t\in K_t$. Moreover, the points $x_t$ are selected in an online fashion; that is, $x_t$ must be fixed before $K_{t+1}$ is revealed. The objective is to minimize the total distance traveled, i.e. $\sum_{t=1}^T \|x_t - x_{t-1}\|$, where $x_0 = 0$. The performance of an algorithm is measured by the \emph{competitive ratio}, defined as
\begin{equation}
\sup_{T\in \N} \sup_{K_1,\dots, K_T} \frac{ALG(K_1,\dots, K_T)}{OPT(K_1,\dots, K_T)} \label{eq:competitive}
\end{equation}
where $ALG(K_1,\dots, K_T)$ denotes the cost of the algorithm's
solution and $OPT(K_1,\dots, K_T)$ denotes the cost of the
optimal solution in hindsight. A \emph{competitive} algorithm is one
with finite competitive ratio (i.e. the ratio above is uniformly bounded for every sequence of nonempty convex sets).

In this paper, we consider \emph{memoryless} algorithms for \CBC,
namely algorithms such that $x_t$ depends only on $K_t$, and not on
previous request sets or points chosen. That is, before seeing any
$K_i$, the algorithm fixes a function $s:\{$nonempty convex subsets of
$\R^d\}\to \R^d$ such that $s(K)\in K$ for all $K$. Such a function
$s$ is known as a \emph{selector}. Henceforth, the algorithm defines
$x_t = s(K_t)$.

Selectors cannot give competitive algorithms for \CBC in general.\footnote{Fix a selector $s$ and let $K$ and $K'$ be two sets such that $s(K)\neq s(K')$ and $K\cap K'\neq \emptyset$. (One can find such a pair of sets, e.g., among the edges of a triangle.) Now consider the request sequence $K, K', K, K', \dots$. The offline optimum is finite, as one can choose $x\in K\cap K'$ and let $x_t=x$ for all $t$. However, the online algorithm pays $\|s(K)-s(K')\|$ at each step, so its cost is unbounded.} However, for \emph{nested} instances where $K_1\supseteq K_2\supseteq \dots \supseteq K_T$, selectors can give competitive algorithms. Bubeck et al. show that \emph{Steiner point selector} $\st(K)$, which is defined as the average of an extreme point of $K$ in a uniformly random direction, achieves the optimal competitive ratio among selectors \cite{bubeck2020chasing}.

The Steiner point is well-known in the functional analytic study of \emph{Lipschitz selectors}. Namely, the Steiner point is notable for being $O(\sqrt{d})$-Lipschitz with respect to the \emph{Hausdorff distance} between two sets, defined as the maximum distance from a point in one set to the other set \cite{daugavet1968some,vitale1985steiner}. In fact, the Steiner point achieves the minimal Lipschitz constant among all selectors \cite{przeslawski1989continuity}. Steiner-type selectors exist even when $\R^d$ is replaced by a general Banach space~\cite{SHVARTSMAN20041}. Lipschitz selections (of set-valued functions) are well-studied and have variety of applications to differential inclusions, metric projection,  calculus of variations, etc. (see \cite{shvartsman} and references therein).

Given the Steiner point's properties as both a competitive and
Lipschitz selector, it is natural to ask about the connection between
these two properties. In particular, is every Lipschitz selector
also competitive for nested \CBC? We consider this question for a
broader class of metrics $\{D_p\}_{p\ge 1}$ for sets, which are $L_p$-analogs
of the Hausdorff metric. These were first defined
in~\cite{mcclure1975polygonal}; $D_\infty$ is the Hausdorff metric. We define them in Section~\ref{sec:def}. Our main result answers the question in the negative. 

\begin{theorem}\label{thm:main-intro}
For any $p \in (1,\infty]$, there is a selector that is Lipschitz with respect to $D_p$ but is not competitive for nested \CBC.
\end{theorem}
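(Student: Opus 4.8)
The plan is to build a selector that behaves like the Steiner point selector on "most" convex bodies but is deliberately corrupted on a cleverly chosen family of nested bodies, in such a way that (a) the corruption forces unbounded movement on a nested instance that has bounded offline optimum, yet (b) the corruption is small enough, measured in the $D_p$ metric with $p>1$, that the resulting selector remains Lipschitz. The key asymmetry I want to exploit is that $D_p$ for $p>1$ is an \emph{averaged} (integrated) notion of discrepancy between the support functions of two bodies, so a perturbation that is large in the sup-norm (Hausdorff, $D_\infty$) on a small set of directions contributes only a tiny amount to $D_p$ for finite $p$; and even for $p=\infty$ one can try to make the perturbation genuinely small in Hausdorff distance but still adversarial for chasing. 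Concretely, I would pick a sequence of nested bodies $K_1 \supseteq K_2 \supseteq \cdots$ all containing a common point $x^\star$, so $\OPT$ is $O(1)$, and define $s(K_t)$ to oscillate (e.g.\ between two antipodal-ish points on $K_t$ whose distance does not shrink as $t$ grows), so $\sum_t \|s(K_t) - s(K_{t-1})\| = \Omega(T)$.

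The first step is to set up the family. I would take the $K_t$ to be thin "slabs" or lens-shaped bodies in $\R^d$ (even $d=2$ should suffice) that nest down toward a segment or a point, arranged so that consecutive bodies in the request sequence $K_1,K_2,K_1,K_2,\dots$ differ by a small amount in $D_p$ — this is where the slabs being long and thin helps, since shrinking a long slab slightly in the thin direction changes the support function on only a narrow band of directions. The second step is to define the selector: on bodies in this family, $s$ returns one of two far-apart boundary points according to some discrete parameter (say the parity of an index, or which side of a threshold the body's "width" falls on); off the family, $s$ agrees with $\st$ (or any fixed Lipschitz selector). The third step is the verification split into two independent claims: (i) \emph{non-competitiveness} — direct computation on the alternating nested sequence shows $\OPT = O(1)$ while $\mathrm{ALG} = \Omega(T) \to \infty$; (ii) \emph{Lipschitzness in $D_p$} — one must bound $\|s(K) - s(K')\|$ by $C \cdot D_p(K,K')$ for \emph{all} pairs $K, K'$, not just consecutive ones in the sequence.

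The main obstacle, and the technical heart of the argument, is claim (ii): making the selector globally $D_p$-Lipschitz while it still jumps by $\Omega(1)$ across the family. The danger is a pair $K, K'$ that are very close in $D_p$ but land on opposite sides of the selector's discontinuity. To handle this I would make the jump happen across a \emph{parameter range of positive $D_p$-width}: interpolate the selector continuously as the body's defining parameter sweeps through an interval, so that two bodies on opposite sides of the midpoint are forced to be at least some fixed $D_p$-distance $\delta>0$ apart, and within distance $O(1)$ of each other in $\R^d$; then the ratio $\|s(K)-s(K')\|/D_p(K,K')$ stays bounded. The subtlety is that we need $D_p$-distance $\delta$ to force a \emph{large} $\R^d$-jump in the bodies' geometry (so the offline optimum can't track it cheaply along the sequence) while \emph{$D_\infty$}-distance along the actual alternating sequence is only polynomially small — i.e.\ we need the two metrics $D_p$ and the "effective cost metric" to genuinely diverge, which is exactly the $p>1$ vs.\ $p=1$ dichotomy. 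I expect the remaining pieces — computing $\OPT$ on the nested alternating instance, checking $s(K) \in K$, and the off-family Lipschitz bound via the triangle inequality through $\st$ — to be routine once the family and the interpolation are correctly calibrated.
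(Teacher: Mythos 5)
Your high-level plan matches the paper's --- corrupt a Steiner-type selector on a hard nested family and exploit the gap between $D_p$ for $p>1$ and the ``cost metric'' $D_1$ --- but the concrete hard instance you sketch cannot exist, and the gap sits exactly where the paper's real work is. You want a nested chain $K_1\supseteq K_2\supseteq\cdots$ inside the unit ball on which the selector oscillates between two points at distance $\Omega(1)$, yielding cost $\Omega(T)$. That contradicts $D_p$-Lipschitzness for \emph{every} $p$: the support functions $h_{K_t}$ are $1$-Lipschitz on $S^{d-1}$, decrease pointwise to a limit, and are bounded, so the convergence is uniform (equicontinuity/Dini), hence $D_\infty(K_t,K_{t+1})\to 0$ and a fortiori $D_p(K_t,K_{t+1})\to 0$ since $\sigma$ is a probability measure. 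An $L$-Lipschitz selector then has $\|s(K_t)-s(K_{t+1})\|\to 0$, so a per-step jump bounded below is impossible. What the paper arranges instead is that $\sum_t D_p(K_t,K_{t+1})$ diverges slowly (like $\sum 1/t$) while each term vanishes, and the selector's jump at step $t$ is calibrated to be \emph{proportional} to $D_p(K_t,K_{t+1})$, not $\Omega(1)$. Relatedly, the $K_1,K_2,K_1,K_2,\dots$ request sequence you write down mid-argument is the classical \emph{non-nested} counterexample from the paper's first footnote; in the nested setting, repeating a set costs nothing for a memoryless algorithm, so it is of no help.

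Producing a genuinely decreasing chain with $\sum_t D_p(K_t,K_{t+1})=\infty$ is therefore the crux, not a calibration detail, and your ``thin slabs in $d=2$'' guess will not produce one. The paper removes disjoint double caps of angles $\theta_i$ from the ball: disjointness forces $\sum_i\theta_i^{d-2}<\infty$, while divergence of $\sum_i D_p(K_i,K_{i+1})$, which scales like $\sum_i\theta_i^{2+(d-1)/p}$, requires $\frac{d-2}{2+(d-1)/p}>1$, i.e.\ $d>\frac{4p-1}{p-1}$. In particular, on the circle disjoint caps satisfy $\sum_i\theta_i<\infty$, which forces $\sum_i\theta_i^2<\infty$ and hence $\sum_i D_\infty(K_i,K_{i+1})<\infty$; the tension you need simply does not arise in $d=2$. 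Finally, you treat the extension to a full selector as routine, but Fact~\ref{fct:no-dp-ext} shows that for finite $p$ a Lipschitz partial selector defined on even a single body may admit \emph{no} Lipschitz selector extension, because generic Lipschitz-extension theorems cannot preserve $s(K)\in K$. The paper's extension works only because \Cref{prop:hard-instance}(I) places every $K_i$ within Hausdorff distance $\eps$ of the ball, so $\bar s(K_i)$ sits deep inside $K_i$, and Fact~\ref{fact:compDp} lets one convert $D_p$-proximity to $D_\infty$-proximity to conclude $\bar s(K_i)\in K$ for nearby $K$; your ``interpolate across a parameter range'' gestures at the bump-function construction, but the depth condition and the disjointness of the bump supports (\Cref{lem:disjoint-balls}) are where the argument actually lives.
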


Furthermore, this result is tight in the sense that it cannot be extended to $p=1$. Indeed, we note that every selector that is $L$-Lipschitz with respect to the $D_1$ metric is actually $L$-competitive. This fact is implicit in Bubeck et al.'s proof that the Steiner point is $d$-competitive~\cite{bubeck2020chasing}. For completeness, we include a proof in \Cref{app:SteinerD1}.

The proof of Theorem~\ref{thm:main-intro} has three main steps.
\begin{enumerate}
\item Construct a ``hard'' nested sequence of sets $K_1\supseteq K_2\supseteq \dots$ such that $\sum_t D_p(K_t, K_{t+1}) = \infty$. Notice that the existence of such a sequence is necessary in order to have a selector that is both non-competitive and Lipschitz with respect to $D_p$. 

\vspace{6pt}
\item Define a partial selector $\bar{s}: \{K_t\}_{t\in \N}\to \R^d$ for these set that is Lipschitz with respect to $D_p$ but not competitive on this sequence.

\vspace{6pt}
\item Extend the partial selector $\bar s$ to a full selector $s: \{$nonempty convex subsets of $\R^d\}\to \R^d$ that is also Lipschitz with respect to $D_p$ (with a larger Lipschitz constant than that of $\bar s$). The selector $s$ is also non-competitive, since it inherits this property from $\bar{s}$.

\smallskip
We remark that there are well-known techniques for extending a Lipschitz partial function to a Lipschitz function on the entire domain; the difficulty here lies in ensuring that the extension is a selector, that is, that $s(K)\in K$ for all $K$.
\end{enumerate}
We construct the sequence of sets in \Cref{sec:packing}, the partial
selector in \Cref{sec:partial-selector}, and the full selector in \Cref{sec:full-selector}.

\subsection{Related work}

Friedman and Linial introduced \CBC in 1993 in order to study the geometry of the related Metrical Task Systems problem \cite{friedman1993convex}. They proved that no algorithm for \CBC in dimension $d$ has competitive ratio lower than $\sqrt{d}$, and gave a competitive algorithm when $d=2$. Initially, most work on \CBC focused on special cases; see, e.g.~\cite{fujiwara2008online,sitters2014generalized,antoniadis2016chasing}. A recent series of papers gave competitive algorithms for the nested case, where $K_1\supseteq K_2\supseteq \dots \supseteq K_T$ \cite{bansal2019nested,argue2019nearly,bubeck2020chasing}. These solutions led to the first competitive algorithm for the general case of \CBC \cite{bubeck2019competitively,argue2020chasing,sellke2020chasing}. The current best-known bounds are $O(\sqrt{d\log d})$ for the nested case \cite{bubeck2020chasing}, $O(\min(d, \sqrt{d\log T}))$ for the general case \cite{argue2020chasing,sellke2020chasing} and $O(\min(k, \sqrt{k\log T}))$ for $k$-dimensional subspaces \cite{argue2020dimension}. We remark that the algorithms of~\cite{argue2020chasing,sellke2020chasing} for the general case also rely on a Steiner-type point, although now applied to a modified version of the sets $K_t$.

\subsection{Definitions and notation}\label{sec:def}

All norms for points in $\R^d$ in this paper are Euclidean. We let $B^d(x_0,r) := \{x\in \R^d: \|x-x_0\| \le r\}$ denote the ball of radius $r$ centered at $x_0\in\R^d$ and we let $B^d:=B(0,1)$ denote the unit ball in $\R^d$. For convenience, we omit the superscripts and write $B(x_0,r)$ and $B$. We let $S^{d-1} = \{x\in \R^d: \|x\|=1\}$ denote the unit sphere in $\R^d$. 

We denote by $\cK^d$ the set of nonempty compact convex sets in $\R^d$, and write $\cK$ when $d$ is clear from context. The \emph{support function} of a set $K\in \cK^d$ is the function $h_K:S^{d-1}\to \R$ given by
\[h_K(y) = \max_{x\in K} \ip{x}{y}.\]
A convex set $K$ is \emph{centrally symmetric} if $K = -K:= \{-x: x\in K\}$.

Let $\sigma_d$ denote the Lebesgue measure on $S^{d-1}$ normalized so that $\sigma_d(S^{d-1}) = 1$. We henceforth drop the subscript and write $\sigma$. The \emph{Hausdorff distance} between two sets in $\cK^d$ is given by the $L_\infty$-distance between their support functions:
\[D_\infty(K,K') := \max_{y \in S^{d-1}} |h_K(y) - h_{K'}(y)|,\] 
More generally, for any $p\in [1,\infty)$, the metric $D_p$ between two sets in $\cK^d$ is given by the $L_p$-distance between their support functions:
\[D_p(K,K') = \left(\int_{y \in S^{d-1}} |h_K(y) - h_{K'}(y)|^p \ d\sigma(y) \right)^{1/p}.\]

For a collection $\bar{\cK}\subseteq \cK^d$ of compact convex sets, a \emph{partial selector} $\bar{s}$ is a function $\bar{s}:\bar{\cK} \to \R^d$ such that $s(K)\in K$ for all $K\in \bar{\cK}$. When $\bar{\cK} = \cK^d$, we say $s = \bar{s}$ is a \emph{selector}. The (partial) selector $\bar{s}:\bar{\cK}\to \R^d$ is $L$-\emph{Lipschitz with respect to $D_p$} if 
\begin{equation}
\|\bar{s}(K_1) - \bar{s}(K_2)\| \le L \cdot D_p(K_1,K_2) \qquad \forall K_1,K_2\in \bar{\cK}.
\end{equation}
We now have the definition of one of the central concepts in this paper.

\begin{definition}[Competitive selector]
For $\gamma > 0$, a (partial) selector $\bar{s} : \bar{\cK} \rightarrow \R^d$ is \emph{$\gamma$-competitive} if every infinite nested sequence $B \supseteq K_1
\supseteq K_2\supseteq \dots$ of sets $K_t\in\bar{\cK}$ satisfies
$\sum_{t=1}^\infty \|\bar{s}(K_{t+1})-\bar{s}(K_t)\| \le
\gamma.$
\end{definition}

Notice that this is not exactly the same as the definition of a competitive algorithm for nested \CBC, since it imposes the condition that all sets $K_t$ are inside the unit ball $B$ and it replaces the competitive ratio \eqref{eq:competitive} by just the total movement $\sum_{t=1}^\infty \|\bar{s}(K_{t+1})-\bar{s}(K_t)\|$. However, a simple reduction shows that a $\gamma$-competitive selector yields a $O(\gamma)$-competitive algorithm for nested \CBC, see~\cite[Lemma 1]{bansal2019nested}.

A selector that is not $\gamma$-competitive for any $\gamma > 0$ is said to be \emph{non-competitive}. Note that any extension of a non-competitive selector is also non-competitive.

The \emph{Steiner point selector} $\st:\cK^d\to \R^d$ is defined by
\begin{equation}\label{def:st-1}
\st(K) := \int_{y\in S^{d-1}} \grad h_K(y)\ d\sigma(y),
\end{equation}
or equivalently,
\begin{equation}\label{def:st-2}
\st(K) := d\int_{y\in S^{d-1}} y \cdot h_K(y)\ d\sigma(y),
\end{equation}
see page 315 of \cite{schneider} for the equivalence.
Note that since $\grad h_K(y) = \argmax_{x\in K} \ip{x}{y}$, definition~(\ref{def:st-1}) implies that $\st(K)\in K$, i.e. the Steiner point is in fact a selector. It follows easily from definition~(\ref{def:st-2}) that $\st(\cdot)$ is $d$-Lipschitz with respect to $D_1$. Notice that for any $p\ge 1$ we have $D_1(K,K')\le D_p(K,K')$. Consequently, $\st(\cdot)$ is $d$-Lipschitz with respect to $D_p$, for all $p\ge 1$.

For two points $u,v\in S^{d-1}$, we define their distance $\ang(u,v)$ as the angle between them, i.e. $\ang(u,v) = \arccos \ip{u}{v}$. Notice this is exactly the geodesic distance on the sphere, so it satisfies triangle inequality.
For $u\in S^{d-1}$ and $\theta\in (0,\pi]$, let $C(u,\theta)$ be the
cap on the sphere $S^{d-1}$ around $u$ with angle $\theta$, namely 
$$C(u,\theta) := \{x \in S^{d-1} : \ang(x,u) < \theta\}.$$ 
Equivalently, $C(u,\theta)$ is the open ball of radius $\theta$ around $u$ with respect to the distance $\ang$.  
Let $\bar{C}(u,\theta)$ denote the convex hull of cap $C(u,\theta)$. Let $C_\pm(u,\theta) = C(u,\theta) \cup C(-u,\theta)$ and let $\bar{C}_\pm(u,\theta) = \bar{C}(u,\theta) \cup \bar{C}(-u,\theta)$; we refer to these as \emph{double caps}. Note that $C(u,\theta)$ is open as a subset of $S^{d-1}$, $\bar{C}(u,\theta)$ is open as a subset of $B$, and hence for any compact set $K\subseteq B$ it holds that $K- \bar{C}(u,\theta)$ is compact.


\section{A nested sequence with infinite $D_p$ movement}
\label{sec:packing}

We now construct, for each $p>1$, a nested sequence of sets such that the cumulative distance between consecutive sets (as measured by $D_p$) is infinite. 

\begin{proposition}\label{prop:hard-instance}
Fix any $p > 1$, dimension $d\ge \frac{4p-1}{p-1}$, and $\eps > 0$. Then there is a sequence $B\supseteq K_1\supseteq K_2\supseteq \dots $ of centrally symmetric sets in $\cK^d$ such that
\begin{itemize}
\item[(I)] $D_\infty(B,K_i) \le \eps$ for all $i$.
\item[(II)]$\sum_{i=1}^\infty D_p(K_i,K_{i+1}) = \infty$.
\item[(III)] The distances $D_p(K_i, K_{i+1})$ are decreasing for $i = 1,2,\ldots$.
\end{itemize}
\end{proposition}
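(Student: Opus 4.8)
The plan is to build the sets $K_i$ by starting from the unit ball $B$ and repeatedly removing small double caps $\bar{C}_\pm(u_i,\theta_i)$, so that $K_i = B \setminus \bigcup_{j<i}\bar{C}_\pm(u_j,\theta_j)$ — or more precisely the closed convex hull of that set. Removing a double cap of angular radius $\theta$ changes the support function only in a small neighborhood (of angular radius $\sim\theta$) of $\pm u$, and there it changes it by roughly $\Theta(\theta^2)$ (the ``sagitta'' of the cap). Hence $D_\infty(B,K_i)\lesssim \theta_{\max}^2$, which gives (I) once we take all $\theta_j$ small enough; and $D_p(K_i,K_{i+1})$ is, up to constants, $\theta_i^2 \cdot (\sigma(\text{neighborhood of size }\theta_i))^{1/p} \sim \theta_i^2 \cdot \theta_i^{(d-1)/p} = \theta_i^{2 + (d-1)/p}$.

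Next I would choose the angles $\theta_i$ and the centers $u_i$. For (II) I want $\sum_i \theta_i^{2+(d-1)/p} = \infty$, so I would take $\theta_i = i^{-\alpha}$ with $\alpha\bigl(2 + \tfrac{d-1}{p}\bigr) \le 1$, i.e. $\alpha \le \frac{p}{2p+d-1}$; this is possible with $\alpha>0$ for every $d$ and $p>1$. The dimension hypothesis $d \ge \frac{4p-1}{p-1}$ should enter here: it is exactly what is needed to make the ``interaction'' terms negligible, i.e. to guarantee that we can pack infinitely many disjoint double caps of these radii on $S^{d-1}$ while keeping $\sum\theta_i$ (or $\sum\theta_i^{d-1}$, the total measure removed) finite, so the caps can be chosen pairwise disjoint and far apart; with the caps disjoint, removing cap $i$ really does change the support function by $\Theta(\theta_i^2)$ on a region of measure $\Theta(\theta_i^{d-1})$, with no cancellation, and $K_\infty$ is still a genuine convex body close to $B$. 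A volumetric/greedy packing argument on the sphere suffices: disjoint caps of radius $\theta_i$ can be placed greedily as long as $\sum_j \theta_j^{d-1}$ is below a constant, and $\sum_i i^{-\alpha(d-1)} < \infty$ precisely when $\alpha(d-1) > 1$; reconciling $\alpha \le \frac{p}{2p+d-1}$ with $\alpha > \frac{1}{d-1}$ forces $d-1 > \frac{2p+d-1}{p}$, i.e. $(p-1)(d-1) > 2p$, i.e. $d > 1 + \frac{2p}{p-1} = \frac{3p-1}{p-1}$; a bit of slack for the constants pushes this to $d \ge \frac{4p-1}{p-1}$, matching the statement. Finally (III) is easy: since $D_p(K_i,K_{i+1}) \asymp \theta_i^{2+(d-1)/p}$ and the $\theta_i$ are decreasing, the $D_p(K_i,K_{i+1})$ are decreasing — or, to be safe, I would just define $\theta_i$ to be decreasing and verify monotonicity directly from the exact expression, possibly passing to a subsequence.

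The main obstacle, and where the real work lies, is the geometric estimate that removing disjoint double caps is ``additive'': I need both a clean lower bound $D_p(K_i, K_{i+1}) \ge c\,\theta_i^{2+(d-1)/p}$ (so the sum diverges) and the fact that the $K_i$ remain convex bodies sandwiched between $(1-\eps)B$ and $B$, which requires controlling how the chords cut by different caps could in principle interact. Keeping the caps pairwise disjoint with a constant-factor angular buffer makes $h_{K_i}$ locally equal to either $h_B$ or the single-cap-removed support function, which decouples everything; making that rigorous — in particular verifying that $\bar C(u,\theta)$ removed from a ball changes $h$ by $\Theta(\theta^2)$ on exactly a $\Theta(\theta)$-cap and is flat elsewhere — is the crux. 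Everything else (choosing $\alpha$, the harmonic-type divergence, the packing count, monotonicity) is then bookkeeping with the exponents, and the dimension bound $d \ge \frac{4p-1}{p-1}$ drops out of making those exponent inequalities simultaneously satisfiable with room for the hidden constants.
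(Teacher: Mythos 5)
Your proposal is essentially the same as the paper's proof: take $\theta_i = c\,i^{-\alpha}$ with $\alpha = 1/(2+(d-1)/p)$, pack disjoint double caps of these angular radii by a greedy volumetric argument, remove them one at a time, and use that removing a disjoint cap of radius $\theta$ changes the support function by $\Theta(\theta^2)$ on a $\Theta(\theta)$-cap, giving $D_p(K_i,K_{i+1}) \asymp \theta_i^{2+(d-1)/p}$. The paper's \Cref{lem:area-cap}, \Cref{lem:disj-caps}, and \Cref{lem:dp-large} are exactly the ``crux'' estimates you identify.

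One small inaccuracy worth flagging: the threshold $d \ge \frac{4p-1}{p-1}$ is not ``slack for the constants'' on top of your derived $d > \frac{3p-1}{p-1}$. The paper's cap-measure bound (its \Cref{fact:prob-bd-1} from Cai--Fan--Jiang) gives $\sigma(C(\theta)) \lesssim \sqrt{d}\,\theta^{d-2}$, i.e.\ exponent $d-2$ rather than the sharper $d-1$ you use; the packing condition then requires $\frac{d-2}{2+(d-1)/p} > 1$, which rearranges exactly to $d > \frac{4p-1}{p-1}$. So the difference is the choice of estimate, not hidden constants — your version with exponent $d-1$ would in fact yield a slightly better dimension threshold, if made rigorous. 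Also note that once the double caps are pairwise disjoint, $K_i$ is automatically an intersection of $B$ with halfspaces and hence convex; no ``closed convex hull'' is needed.
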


Notice that in addition to having infinite cumulative distance (Item (II)), we also have the property that the sets $K_i$ are very close to being a ball (Item (I)). This will be important in the construction of our partial selector $\bar{s} : \{K_i\} \rightarrow \R^d$ and its extension to a selector $s$ for all sets in $\cK$. As mentioned in the introduction, guaranteeing $s(K) \in K$ is the hard part of obtaining an extension. Item (I) allows us to place $\bar{s}(K_i)$ ``very deep'' into $K_i$, so that for a set $K \approx K_i$ it suffices for the extension to have $s(K) \approx s(K_i) = \bar{s}(K_i)$ to ensure that $s(K)$ is inside $K$. (Item (III) in the statement is mostly a convenient technical property.)

This infinite sequence of sets $K_i$ is obtained by starting from the unit ball $B$ and removing an infinite set of \textbf{disjoint} double caps $\bar{C}_{\pm}(u_i, \theta_i)$, one at a time. Disjointness will be useful for two reasons. First, it means that when removing the double cap $\bar{C}_{\pm}(u_i, \theta_i)$ from $K_i$ (to obtain the next set $K_{i+1}$) the part lost is exactly the same as if removing this double cap from the original ball, which greatly simplifies computing the distance $D_p(K_i, K_{i+1})$. Second, it more easily guarantees that we never cut in too deeply (ensuring Item (I) of the proposition). 

	To carry out this idea, we need to show that there is indeed an infinite set of these double caps that are (a) ``small enough'' to be disjoint but that (b) ``cut enough'' to make sure the cumulative distance $\sum_{i=1}^\infty D_p(K_i,K_{i+1})$ is infinite (Item (II) of the proposition).\footnote{The underlying reason why this is possible for $p > 1$ (but not $p=1$), is the following: If the body $K'$ is obtained by cutting off a small part of $K$, then $\frac{D_p(K',K)}{D_1(K',K)}$ goes to infinity as the size of the piece cut off goes to 0. Thus, if one chooses an infinite sequence of cuts where the sum of the $D_1$-distance between the subsequent bodies converges but barely, then the sum of the $D_p$-distances will diverge. This phenomenon is flexible enough to also accommodate Items I and III of the proposition.} Our key technical ingredients are (respectively) (a) estimates for the normalized area of a cap and (b) the  $D_p$-distance between the unit ball and the ball minus a cap. 

 To start, recall that $\sigma$ is the normalized uniform measure over $S^{d-1}$. We abuse notation and write $\sigma(C(\theta))$ for the area of any cap with angle $\theta$ (the area depends only on the angle, and not on the center).

Our first lemma is an upper bound on $\sigma(C(\theta))$ in terms of $\theta$. This follows simply from the following (slightly restated) estimate that can be found, for example, in~\cite{cai2013distributions}.

\begin{fact}{\cite[Proposition 5]{cai2013distributions}}\label{fact:prob-bd-1}
Fix $u\in S^{d-1}$ and $\lambda \in(0,\frac{\pi}{2})$. Then the probability that a uniformly 
random point from $S^{d-1}$ has an angle with $u$ that is outside of the interval $(\frac{\pi}{2} - \lambda, \frac{\pi}{2} + \lambda)$
is at most $\gamma\cdot \sqrt{d}\cdot \cos(\lambda)^{d-2}$, where $\gamma$ is a constant that does not depend on $d$ or $\lambda$.
\end{fact}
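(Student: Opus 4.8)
The plan is to reduce the statement to a one-dimensional density estimate and then bound the resulting integral by elementary calculus, with a short case analysis at the end to extract a universal constant. By rotational symmetry I may take $u = e_1$. For $X$ uniform on $S^{d-1}$, write $\psi := \arcsin X_1 \in [-\tfrac\pi2,\tfrac\pi2]$ for the ``latitude'' of $X$; then $\arccos\langle X,u\rangle = \tfrac\pi2 - \psi$, so the angle lies outside $(\tfrac\pi2-\lambda,\tfrac\pi2+\lambda)$ precisely when $|\psi| \ge \lambda$. The standard slicing description of the uniform measure on the sphere --- the fiber of $X \mapsto X_1$ at latitude $\psi$ is a $(d-2)$-sphere of radius $\cos\psi$, and in the parametrization $X = (\sin\psi,\ \cos\psi\cdot Y)$ with $Y$ uniform on $S^{d-2}$ the latitude coordinate carries a unit Jacobian --- shows that $\psi$ has density proportional to $(\cos\psi)^{d-2}$. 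Hence
\[
\P\bigl(|\psi| \ge \lambda\bigr) \;=\; \frac{2\int_\lambda^{\pi/2}(\cos\psi)^{d-2}\,d\psi}{\int_{-\pi/2}^{\pi/2}(\cos\psi)^{d-2}\,d\psi}.
\]
For $d \le 2$ the claimed right-hand side is at least $\gamma$, so the bound is trivial for any $\gamma \ge 1$; assume henceforth $d \ge 3$.

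Next I would bound the two integrals. For the numerator, since $\sin\psi \ge \sin\lambda$ on $[\lambda,\tfrac\pi2]$, inserting the factor $\sin\psi/\sin\lambda \ge 1$ gives
\[
\int_\lambda^{\pi/2}(\cos\psi)^{d-2}\,d\psi \;\le\; \frac{1}{\sin\lambda}\int_\lambda^{\pi/2}(\cos\psi)^{d-2}\sin\psi\,d\psi \;=\; \frac{(\cos\lambda)^{d-1}}{(d-1)\sin\lambda}.
\]
For the denominator, restricting the integral to $|\psi| \le 1/\sqrt d$ and using $\cos(1/\sqrt d) \ge 1 - \tfrac1{2d}$ together with Bernoulli's inequality $(1-\tfrac1{2d})^d \ge \tfrac12$ gives $\int_{-\pi/2}^{\pi/2}(\cos\psi)^{d-2}\,d\psi \ge \tfrac2{\sqrt d}(\cos\tfrac1{\sqrt d})^d \ge \tfrac1{\sqrt d}$; alternatively this integral is a Beta function and one can invoke Gautschi's inequality. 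Combining, $\P(|\psi|\ge\lambda) \le \frac{2\sqrt d\,(\cos\lambda)^{d-1}}{(d-1)\sin\lambda}$.

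Finally I would close with a case split on $\lambda$. If $\lambda \le 1/\sqrt d$, then $(\cos\lambda)^{d-2} \ge (1-\tfrac1{2d})^d \ge \tfrac12$, so $\gamma\sqrt d\,(\cos\lambda)^{d-2} \ge \tfrac{\gamma}{2}\sqrt d \ge 1$ as soon as $\gamma \ge 2$, and the inequality holds trivially since a probability is at most $1$ --- this is exactly where the $\sqrt d$ factor in the statement earns its keep. If $\lambda > 1/\sqrt d$, then $\sin\lambda \ge \tfrac2\pi\cdot\tfrac1{\sqrt d}$, hence $1/\sin\lambda \le \tfrac{\pi\sqrt d}{2}$, and the bound from the previous paragraph becomes
\[
\P\bigl(|\psi|\ge\lambda\bigr) \;\le\; \frac{2\sqrt d\,(\cos\lambda)^{d-1}}{d-1}\cdot\frac{\pi\sqrt d}{2} \;=\; \frac{\pi d}{d-1}\,(\cos\lambda)^{d-1} \;\le\; 2\pi\,(\cos\lambda)^{d-2},
\]
using $d\ge 2$ and $\cos\lambda \le 1$. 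Taking $\gamma := 2\pi$ handles both cases (and the $d\le 2$ case), finishing the proof. I do not expect a real obstacle here: the only point demanding any care is justifying the density/slicing claim, and that step may simply be cited as the standard description of the one-dimensional marginal of the uniform distribution on a sphere; extracting a single universal $\gamma$ is what forces the (routine) case split above.
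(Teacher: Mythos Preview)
Your proof is correct. Note, however, that the paper does not actually prove this fact: it is stated as a citation to \cite[Proposition 5]{cai2013distributions} and used as a black box, so there is no ``paper's own proof'' to compare against. What you have written is a clean, self-contained derivation via the standard one-dimensional marginal of the uniform measure on the sphere, and all the estimates (the $\sin\psi/\sin\lambda$ trick for the tail integral, the $|\psi|\le 1/\sqrt{d}$ window for the normalizing constant, and the case split on $\lambda \lessgtr 1/\sqrt{d}$) are valid. One cosmetic point: in the denominator bound and in the first case you write the exponent $d$ where the integrand actually has exponent $d-2$; this is harmless since $(\cos\psi)^{d-2}\ge(\cos\psi)^{d}$ for $\psi\in(0,\tfrac\pi2)$, so your inequalities still go the right way, but it would read more cleanly with the exponent $d-2$ throughout. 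Also observe that in the second case you in fact obtain the stronger bound $2\pi(\cos\lambda)^{d-2}$ without the $\sqrt d$ factor; the $\sqrt d$ in the statement is only needed to absorb the small-$\lambda$ regime where the probability is order one.
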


\begin{lemma}\label{lem:area-cap}
For $\theta \in (0, \frac{\pi}{2})$, 
$$\sigma(C(\theta)) \le \gamma \cdot \sqrt{d} \cdot \theta^{d-2},$$ 
where $\gamma$ is a constant that does not depend on $d$ or $\theta$.
\end{lemma}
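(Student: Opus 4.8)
The idea is that \Cref{lem:area-cap} is just \Cref{fact:prob-bd-1} restated: the event in the Fact (``angle with $u$ lies outside the equatorial band $(\tfrac\pi2-\lambda,\tfrac\pi2+\lambda)$'') is precisely a double cap of angular radius $\tfrac\pi2-\lambda$ around $u$, and the cosine in the bound turns into a sine after the change of variables $\lambda=\tfrac\pi2-\theta$.

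Concretely, fix $u\in S^{d-1}$, let $\theta\in(0,\tfrac\pi2)$, and set $\lambda:=\tfrac\pi2-\theta\in(0,\tfrac\pi2)$, so that \Cref{fact:prob-bd-1} applies. A point $x\in S^{d-1}$ has $\ang(x,u)$ outside $(\tfrac\pi2-\lambda,\tfrac\pi2+\lambda)$ iff $\ang(x,u)\le\tfrac\pi2-\lambda=\theta$ or $\ang(x,u)\ge\tfrac\pi2+\lambda$, i.e.\ $\ang(x,-u)\le\theta$. The two regions $\{x:\ang(x,u)\le\theta\}$ and $\{x:\ang(x,-u)\le\theta\}$ are disjoint because $\theta<\tfrac\pi2$, and each differs from the open cap $C(u,\theta)$, resp.\ $C(-u,\theta)$, only by the measure-zero sphere $\{x:\ang(x,u)=\theta\}$. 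Hence the probability in \Cref{fact:prob-bd-1} equals $\sigma(C(u,\theta))+\sigma(C(-u,\theta))=2\,\sigma(C(\theta))$, and the Fact gives
\[
2\,\sigma(C(\theta))\ \le\ \gamma\cdot\sqrt d\cdot\cos(\lambda)^{d-2}\ =\ \gamma\cdot\sqrt d\cdot\sin(\theta)^{d-2}\ \le\ \gamma\cdot\sqrt d\cdot\theta^{d-2},
\]
using $\cos(\tfrac\pi2-\theta)=\sin\theta$ and $\sin\theta\le\theta$. Replacing $\gamma$ by $\gamma/2$ (still a constant independent of $d$ and $\theta$) yields the claimed bound $\sigma(C(\theta))\le\gamma\sqrt d\,\theta^{d-2}$.

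There is no real obstacle here; the only point requiring a moment's care is the bookkeeping that the ``outside the band'' event is a \emph{double} cap (hence the factor $2$, harmlessly absorbed into the constant) and that open versus closed caps and the choice of strict versus non-strict inequalities are irrelevant since the relevant boundaries are $(d-2)$-spheres of $\sigma$-measure zero.
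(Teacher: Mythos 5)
Your proof is correct and takes essentially the same approach as the paper: set $\lambda=\tfrac\pi2-\theta$, invoke \Cref{fact:prob-bd-1}, and use $\cos(\tfrac\pi2-\theta)=\sin\theta\le\theta$. The only (immaterial) difference is that you observe the probability is exactly $2\sigma(C(\theta))$, whereas the paper simply uses the weaker bound $\sigma(C(\theta))\le$ that probability.
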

	  
\begin{proof}  
 Fix any center $u$ for the cap, and letting $\lambda = \frac{\pi}{2} - \theta$ we see that the area $\sigma(C(\theta))$ is at most the probability bounded in \Cref{fact:prob-bd-1}. The conclusion then follows from the estimate $\cos(\lambda) = \cos(\frac{\pi}{2}-\theta) =
\sin(\theta) \le \theta$, which is valid for all $\theta \ge 0$.
 \end{proof}
	
Using this estimate, we can show how to pack double-caps into a sphere using a standard volumetric bound. 
	
\begin{lemma}\label{lem:disj-caps}
Let $\frac{\pi}{2}\ge \theta_1 \ge \theta_2 \ge \ldots$ be such that $\sum_{i=1}^\infty \sigma(C(2\theta_i)) < \frac{1}{2}$. Then there is a set of points $u_1,u_2,\ldots$ in $S^{d-1}$ such that the double-caps $\{C_{\pm}(u_i, \theta_i)\}_{i\in \N}$ are pairwise disjoint.
\end{lemma}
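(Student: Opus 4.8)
The plan is to do a greedy/inductive construction: build the points $u_1, u_2, \ldots$ one at a time, and at step $i$ show that the ``forbidden region'' on $S^{d-1}$ (points too close to the previously chosen caps or their antipodes) has measure strictly less than $1$, so a valid $u_i$ exists. The key observation is that two double-caps $C_\pm(u_i,\theta_i)$ and $C_\pm(u_j,\theta_j)$ with $i<j$ (hence $\theta_j \le \theta_i$) are disjoint precisely when $u_j$ avoids an enlargement of the earlier double-cap: specifically, if $\ang(u_j, \pm u_i) \ge \theta_i + \theta_j$ for both signs, then by the triangle inequality for the geodesic distance $\ang$ no point can lie in both $C(u_i,\theta_i)$ and $C(u_j,\theta_j)$ (and similarly for the antipodal caps). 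Since $\theta_j \le \theta_i$, it suffices that $u_j \notin C_\pm(u_i, 2\theta_i)$.

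First I would set up the induction. Suppose $u_1,\ldots,u_{i-1}$ have been chosen so that the double-caps $C_\pm(u_1,\theta_1),\ldots,C_\pm(u_{i-1},\theta_{i-1})$ are pairwise disjoint. The set of ``bad'' choices for $u_i$ is $\bigcup_{j<i} C_\pm(u_j, 2\theta_j)$ — if $u_i$ lies outside this union, then for each $j<i$ we have $\ang(u_i,\pm u_j)\ge 2\theta_j \ge \theta_j+\theta_i$, which by the triangle-inequality argument above makes $C_\pm(u_i,\theta_i)$ disjoint from $C_\pm(u_j,\theta_j)$. Now I bound the measure of the bad set: $\sigma\!\left(\bigcup_{j<i} C_\pm(u_j,2\theta_j)\right) \le \sum_{j<i}\sigma(C_\pm(u_j,2\theta_j)) \le 2\sum_{j<i}\sigma(C(2\theta_j)) < 2\cdot \frac12 = 1$, using that a double-cap has at most twice the area of a single cap and the hypothesis $\sum_i \sigma(C(2\theta_i)) < \frac12$. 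Hence the complement of the bad set is nonempty, and we may pick $u_i$ in it. This continues for all $i\in\N$, producing the desired infinite sequence.

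One point to handle carefully is whether $2\theta_j$ can exceed $\frac{\pi}{2}$ (or even $\pi$), since the cap estimate in Lemma 1.6 is stated for angles in $(0,\frac\pi2)$ and a cap of angle $\ge \pi$ would be all of $S^{d-1}$. But the hypothesis $\sum_i \sigma(C(2\theta_i)) < \frac12$ already forces $\sigma(C(2\theta_i)) < \frac12$ for every $i$, so in particular $C(2\theta_i)\ne S^{d-1}$ and $2\theta_i < \pi$; this is all that is needed for the subadditivity bound, so no separate case analysis is required. The main (and really only) obstacle is getting the disjointness criterion right — i.e., correctly identifying that avoiding the doubled double-cap $C_\pm(u_j,2\theta_j)$ is the right sufficient condition and checking it via the geodesic triangle inequality for both the $+u_j$ and $-u_j$ caps; everything after that is a routine union bound. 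I would also remark that the argument only uses $\theta_i$ non-increasing to replace $\theta_i+\theta_j$ by $2\theta_j$ for $j<i$, which is why the monotonicity hypothesis is stated.
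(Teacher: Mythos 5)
Your proof is correct and follows essentially the same argument as the paper: reduce disjointness of the double-caps $C_\pm(u_i,\theta_i)$ and $C_\pm(u_j,\theta_j)$ (with $j<i$) to the sufficient condition $u_i\notin C_\pm(u_j,2\theta_j)$ via the geodesic triangle inequality and the monotonicity of the $\theta_i$, then pick the $u_i$ greedily using the union bound $\sum_j \sigma(C_\pm(u_j,2\theta_j)) \le 2\sum_j\sigma(C(2\theta_j)) < 1$. Your extra remarks on why $2\theta_j<\pi$ causes no trouble are a fine sanity check but not strictly needed, since the hypothesis $\theta_j\le\pi/2$ and subadditivity already suffice.
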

	
\begin{proof}
The claim is equivalent to saying that for all $i <j$, it holds that $u_j \notin C_\pm(u_i, \theta_i+\theta_j)$. Since the $\theta_i$ are decreasing, $\theta_i +\theta_j < 2\theta_i$. Therefore it suffices to choose $u_j\notin C_{\pm} (u_i, 2\theta_i)$ for any $i<j$. We construct such $u_i$'s iteratively: after picking $u_1,\ldots,u_i$, we have 
$$
\sum_{i=1}^k \sigma(C_{\pm} (u_i, 2\theta_i)) 
~=~ 2 \sum_{i=1}^k \sigma(C (2\theta_i))
~<~ 1 ~=~ \sigma(S^{d-1}).
$$
Thus there is a point in $S^{d-1}$ that is not in any previous cap. Choose this point to be $u_{i+1}$.   \end{proof}
	
Using \Cref{lem:area-cap}, for the packing property to hold we just need 
\begin{align}
	\sum_i \theta_i^{d-2} \le g(d,p), \label{eq:packingNeed}
\end{align}
for some function $g$ of $d$ and $p$. 

We now turn to getting a lower bound on $D_p(B, B - \bar{C}(u_i,\theta_i))$ in terms of $\theta$. 
It will be convenient to have another technical lemma also from~\cite{cai2013distributions}.

\begin{fact}{\cite[Lemma 12]{cai2013distributions}}\label{fact:prob-bd-2}
Let $v\in S^{d-1}$ be fixed and $U$ be sampled uniformly from $S^{d-1}$. Then the random angle $\rho(U,v)$ between $U$ and $v$ is distributed according to the density function $t \mapsto f_1(d) (\sin t)^{d-2}$, where $f_1(d)$ is a constant depending only on $d$.
\end{fact}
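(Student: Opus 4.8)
The plan is to reduce the statement to a one-dimensional computation using rotational symmetry, and then to invoke the standard disintegration of surface measure on $S^{d-1}$ in spherical coordinates.

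First, since $\sigma$ is invariant under orthogonal transformations, the law of $\rho(U,v)$ does not depend on $v$, so I would take $v$ to be the ``north pole'' $e_d = (0,\dots,0,1)$. Then $\rho(U,v) = \arccos\langle U, e_d\rangle = \arccos U_d$, and it suffices to find the density of the polar angle of a uniformly random point of $S^{d-1}$, equivalently the density of its last coordinate $U_d$.

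Next I would write a point of $S^{d-1}$ as $U = (\sin t)\,\omega \oplus (\cos t)\, e_d$ with polar angle $t\in[0,\pi]$ and $\omega\in S^{d-2}$, and use
\[
d\sigma_{S^{d-1}} \;=\; c_d\,(\sin t)^{d-2}\, dt\, d\sigma_{S^{d-2}}(\omega),
\]
for a normalizing constant $c_d$; the weight $(\sin t)^{d-2}$ is exactly the $(d-2)$-dimensional area of the slice $\{U_d = \cos t\}$, a $(d-2)$-sphere of radius $\sin t$. Integrating out $\omega\in S^{d-2}$ leaves the marginal density of $t$ proportional to $(\sin t)^{d-2}$ on $[0,\pi]$, and normalizing gives the claim with $f_1(d) = \left(\int_0^\pi (\sin s)^{d-2}\,ds\right)^{-1}$, which depends only on $d$. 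An alternative that avoids spherical coordinates is to realize $U = G/\|G\|$ for a standard Gaussian vector $G$, observe that $U_d^2 = G_d^2/\|G\|^2$ is $\mathrm{Beta}(\tfrac12,\tfrac{d-1}{2})$-distributed (a ratio of independent chi-squares), deduce that $U_d$ has density proportional to $(1-x^2)^{(d-3)/2}$ on $[-1,1]$ by symmetry, and change variables $x = \cos t$.

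There is no genuine obstacle here: this is a classical fact, and is cited as such. The only step that needs a line of justification is the $(\sin t)^{d-2}$ weight in the surface-measure disintegration — equivalently, that a single coordinate of a uniform point on $S^{d-1}$ has density proportional to $(1-x^2)^{(d-3)/2}$ — and everything else is bookkeeping and normalization.
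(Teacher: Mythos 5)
Your proof is correct, but note that the paper does not actually prove this statement---it is quoted as a known fact and cited to Lemma~12 of Cai--Fan--Jiang (2013), so there is no in-paper argument to compare against. Your derivation is the standard one: reduce by rotational invariance to the polar angle relative to a fixed pole, then invoke the disintegration $d\sigma_{S^{d-1}} = c_d\,(\sin t)^{d-2}\,dt\,d\sigma_{S^{d-2}}$ (the $(\sin t)^{d-2}$ factor being the relative $(d-2)$-dimensional volume of the latitude sphere of radius $\sin t$), integrate out the angular part, and normalize. Both this route and your Gaussian alternative (deducing $U_d \sim$ density $\propto (1-x^2)^{(d-3)/2}$ from the $\mathrm{Beta}(\tfrac12,\tfrac{d-1}{2})$ law of $U_d^2$, then substituting $x=\cos t$) are clean and correct, with the normalizing constant $f_1(d) = \left(\int_0^\pi (\sin s)^{d-2}\,ds\right)^{-1}$ as you state. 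Nothing is missing; you have simply supplied a proof where the paper chose to cite.
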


Note that the following lemma and its proof consider caps $\bar{C}(u_i,\theta_i)$ and not double-caps $\bar{C}_\pm(u_i,\theta_i)$.
	
\begin{lemma}\label{lem:dp-large}
Let $\theta\in(0,\frac{\pi}{2}]$, and $v\in S^{d-1}$ be arbitrary. Then
$$D_p(B, B - \bar{C}(v,\theta)) \ge f(d,p) \cdot \theta^{2 + (d-1)/p}$$ for some function $f$ of $d$ and $p$. 
\end{lemma}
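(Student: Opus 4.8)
The plan is to compute the support function of $K' := B - \bar{C}(v,\theta)$ explicitly, reduce the $D_p$-distance to a one-dimensional integral over the polar angle to $v$, and then lower-bound that integral by restricting to a favorable sub-range of angles.

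First I would identify the set. Since $\bar{C}(v,\theta)$ is the convex hull of $\{x\in S^{d-1}:\langle x,v\rangle>\cos\theta\}$, it is contained in the open halfspace $\{x:\langle x,v\rangle>\cos\theta\}$; conversely every point of $B$ in that halfspace is a convex combination of two cap points (slice the picture by a $2$-plane through $v$ and draw the chord parallel to $\{\langle x,v\rangle=\langle x_0,v\rangle\}$). Hence $K' = B\cap\{x:\langle x,v\rangle\le\cos\theta\}$, a ball cap, which is in particular convex, so its support function is well defined and easy to pin down. We have $h_B\equiv 1$; and for a direction $y\in S^{d-1}$ with $\alpha := \rho(y,v)$, if $\alpha\ge\theta$ then the maximizer $y$ of $\langle x,y\rangle$ over $B$ already lies in $K'$, so $h_{K'}(y)=1$, while if $\alpha<\theta$ the constraint $\langle x,v\rangle\le\cos\theta$ is active and writing $x=av+z$ with $z\perp v$ one gets $h_{K'}(y)=\cos\theta\cos\alpha+\sin\theta\sin\alpha=\cos(\theta-\alpha)$, attained at $\cos\theta\, v+\sin\theta\, w$ with $w$ the unit vector along $y-\langle y,v\rangle v$. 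Thus $|h_B(y)-h_{K'}(y)| = 1-\cos(\theta-\rho(y,v))$ when $\rho(y,v)<\theta$ and $0$ otherwise.

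Next I would use Fact~\ref{fact:prob-bd-2}: since the angle $\rho(U,v)$ of a uniform point has density $f_1(d)(\sin t)^{d-2}$,
\[
D_p(B,K')^p=\int_{S^{d-1}}\big|h_B(y)-h_{K'}(y)\big|^p\,d\sigma(y)=f_1(d)\int_0^{\theta}\big(1-\cos(\theta-t)\big)^p(\sin t)^{d-2}\,dt.
\]
To lower-bound this, restrict the integral to $t\in[\theta/4,\theta/2]$. On this range $\theta-t\in[\theta/2,3\theta/4]\subseteq[0,\pi/2]$, so $1-\cos(\theta-t)\ge c(\theta-t)^2\ge\tfrac{c}{4}\theta^2$ for the absolute constant $c$ with $1-\cos s\ge c s^2$ on $[0,\pi/2]$; and since $\sin$ is increasing on $[0,\pi/4]$ with $\sin s\ge\tfrac{2}{\pi}s$ there, $\sin t\ge\sin(\theta/4)\ge\tfrac{\theta}{2\pi}$. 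As the interval has length $\theta/4$, this gives $D_p(B,K')^p\ge f_1(d)\big(\tfrac{c}{4}\theta^2\big)^p\big(\tfrac{\theta}{2\pi}\big)^{d-2}\cdot\tfrac{\theta}{4}=C(d,p)\,\theta^{2p+d-1}$, and taking $p$-th roots yields the claim with $f(d,p)=C(d,p)^{1/p}$.

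The routine parts are the scalar inequalities $1-\cos s\ge c s^2$ and $\sin s\ge\tfrac{2}{\pi}s$. The one genuinely geometric step — and where I would be most careful — is the identification $K'=B\cap\{\langle x,v\rangle\le\cos\theta\}$ together with the support-function computation for $\alpha<\theta$: one must check that the constrained maximum of $\langle x,y\rangle$ over $K'$ is attained on the rim $\{\|x\|=1,\ \langle x,v\rangle=\cos\theta\}$ and not elsewhere on $\partial K'$, which follows because on the spherical part $\{\|x\|=1,\ \langle x,v\rangle\le\cos\theta\}$ the value $\langle x,y\rangle$ is, after parametrizing by the angle $\beta\in[\theta,\pi]$ to $v$, equal to $\cos(\beta-\alpha)$, maximized at $\beta=\theta$ by monotonicity of $\cos$ on $[0,\pi]$.
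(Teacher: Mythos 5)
Your proposal is correct and follows essentially the same strategy as the paper: identify how the support function of $B-\bar C(v,\theta)$ differs from $h_B\equiv 1$ on directions within the cap, pass to a one-dimensional integral over the angle to $v$ via Fact~\ref{fact:prob-bd-2}, and lower-bound by restricting to an inner sub-interval. The only cosmetic differences are that you compute $h_{K'}(y)=\cos(\theta-\rho(y,v))$ exactly (the paper only establishes the upper bound $h_{B^-}(u)\le\cos(\theta-\rho(v,u))$, which suffices for the lower bound on $D_p$), and you integrate over $[\theta/4,\theta/2]$ while the paper uses $[0,\theta/2]$; both yield the same $\theta^{2p+d-1}$ scaling.
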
	
	
\begin{proof}
For notational ease, let $B^-:= B - \bar{C}(v,\theta).$
Firstly, $h_{B}$ and $h_{B^-}$ only differ on directions $u \in C(v, \theta)$. For such $u$, note that $C(u, \theta - \ang(v,u))$ is contained in $C(v, \theta)$ because $\ang(\cdot,\cdot)$ satisfies the triangle inequality. 
Thus, $B^{-}$ has excluded all points from $C(u, \theta - \ang(u,v))$, that is, every vector $y \in B^{-}$ has angle at least $\theta - \ang(u,v)$ with $u$, and hence $\ip{u}{y} \le \cos(\theta - \ang(u,v)$. Thus,
$$h_{B^-}(u) 
	\le \cos\bigg(\theta - \ang(v, u)\bigg) 
	\le 1 - \frac{1}{4}\bigg(\theta - \ang(v, u)\bigg)^2,$$
where the second inequality follows from the estimate $\cos(x) \le 1 - \frac{x^2}{4}$, which is valid for all $x \in [0,\frac{\pi}{2}]$.
Since $h_{B}(u) = 1$ for all $u \in S^{d-1}$, using the above bound and then \Cref{fact:prob-bd-2} we obtain
\begin{align*}
	D_p(B, B^-)^p &= \int_{C(v,\theta) \Big(h_B(u) - h_{B^-}(u) \Big)^p ~d\sigma(u)}\\
	&\ge 	\int_{C(v,\theta)} \bigg[\frac{1}{4}\Big(\theta - \ang(v,u)\Big)^2\bigg]^p ~d\sigma(u)\\
	&= f_1(d) \cdot \int_0^{\theta} \bigg[\frac{1}{4}\Big(\theta - t\Big)^2\bigg]^p \cdot (\sin t)^{d-2} ~\d t \tag{\Cref{fact:prob-bd-2}}\\
	&\ge f_2(d,p) \cdot \int_0^{\blue{\theta/2}} \Big(\theta - \tfrac{\theta}{2}\Big)^{2p} \cdot (\sin t)^{d-2} ~\d t \tag{$f_2 := \frac{1}{4^p}\cdot f_1$}\\
	&\ge f_3(d,p) \cdot \int_0^{\theta/2} \Big(\tfrac{\theta}{2}\Big)^{2p} \cdot \blue{t}^{d-2} ~\d t \tag{$f_3 := \frac{1}{2^{d-2}}\cdot f_2$}\\
	& = f_4(d,p) \cdot \theta^{2p+d-1}, \tag{$f_4 := \frac{1}{2^{2p+d-1}(d-2)}\cdot f_3$} 
\end{align*}
where in the last inequality we used that $(\sin t) \ge \frac{t}{2}$ for $t\le \frac{\pi}{4}$. Taking $p$-th roots and renaming $f_4$ proves the lemma.  
 \end{proof}
	
\Cref{lem:disj-caps} and~\ref{lem:dp-large} now give a sequence of disjoint double-caps that induce infinite movement.

\begin{lemma} \label{lem:packing}
Fix any $p > 1$, dimension $d\ge \frac{4p-1}{p-1}$, and $\eps > 0$. Then there is a sequence of angles $\arccos(1-\eps) \ge \theta_1\ge \theta_2 \ge \dots$ and a sequence  $u_1,u_2,\dots$ of points in $S^{d-1}$ satisfying:
\begin{enumerate}\setlength{\itemsep}{0pt}
	\item[(i)] The double-caps $\{C_{\pm}(u_i, \theta_i)\}_{i\in \N}$ are pairwise disjoint.
	\item[(ii)] $\sum_{i=1}^\infty D_p(B, B - \bar{C}(u_i,\theta_i)) = \infty$
\end{enumerate}
\end{lemma}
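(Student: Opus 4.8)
The plan is to choose the sequence of angles $\theta_i$ to be a power of $1/i$, say $\theta_i = c\cdot i^{-\alpha}$ for a suitable exponent $\alpha > 0$ and small constant $c$, and then verify the two requirements separately using the two estimates assembled above. The key point is a dimension-dependent window for $\alpha$: we need $\sum_i \theta_i^{d-2} < \infty$ (so that, after scaling $c$ down, the packing hypothesis $\sum_i \sigma(C(2\theta_i)) < \tfrac12$ of \Cref{lem:disj-caps} holds via the cap-area bound of \Cref{lem:area-cap}), while simultaneously $\sum_i \theta_i^{2 + (d-1)/p} = \infty$ (so that the movement lower bound of \Cref{lem:dp-large} forces Item (ii)). Thus I need $\alpha(d-2) > 1$ and $\alpha\big(2 + \tfrac{d-1}{p}\big) \le 1$; such an $\alpha$ exists precisely when $\tfrac{1}{d-2} < \tfrac{1}{2 + (d-1)/p}$, i.e. $d - 2 > 2 + (d-1)/p$, i.e. $(d-2)p > 2p + d - 1$, i.e. $d(p-1) > 4p - 1$, i.e. $d > \tfrac{4p-1}{p-1}$. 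This is exactly the hypothesis $d \ge \tfrac{4p-1}{p-1}$ (with the boundary case $d = \tfrac{4p-1}{p-1}$ handled by a boundary choice of $\alpha$, or by the fact that the inequality is non-strict and one still gets a divergent $p$-series at the endpoint when $\alpha(2+(d-1)/p)=1$, since $\sum i^{-1}$ diverges). So I would fix such an $\alpha$, fix $c$ small enough that $\theta_1 = c \le \arccos(1-\eps)$ and simultaneously small enough (using \Cref{lem:area-cap}) that $\sum_i \sigma(C(2\theta_i)) \le 2\gamma\sqrt d \sum_i (2c)^{d-2} i^{-\alpha(d-2)} < \tfrac12$; this is possible since the series $\sum_i i^{-\alpha(d-2)}$ converges and its prefactor scales like $c^{d-2} \to 0$.

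With the angles chosen, the construction proceeds as follows. First, the $\theta_i$ are nonincreasing and bounded by $\arccos(1-\eps) \le \tfrac{\pi}{2}$ by the choice of $c$, so \Cref{lem:disj-caps} applies and yields points $u_1, u_2, \dots \in S^{d-1}$ with the double-caps $\{C_\pm(u_i,\theta_i)\}$ pairwise disjoint; this is Item (i). Second, for Item (ii), I apply \Cref{lem:dp-large} to each $\theta_i$ (each lies in $(0,\tfrac{\pi}{2}]$) and the arbitrary center $u_i$, obtaining $D_p(B, B - \bar C(u_i,\theta_i)) \ge f(d,p)\,\theta_i^{2 + (d-1)/p} = f(d,p)\, c^{2+(d-1)/p}\, i^{-\alpha(2+(d-1)/p)}$, and summing over $i$ gives a constant times $\sum_i i^{-\alpha(2+(d-1)/p)}$, which diverges because $\alpha(2 + (d-1)/p) \le 1$. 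That establishes both items.

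The only genuine obstacle is the arithmetic of the exponent window — checking that the interval of admissible $\alpha$ is nonempty exactly under the stated dimension bound, and dealing cleanly with the endpoint $d = \tfrac{4p-1}{p-1}$. One clean way to treat the endpoint uniformly is simply to take $\alpha = \tfrac{1}{2 + (d-1)/p}$, so the divergent sum becomes the harmonic series $\sum_i i^{-1}$, and then note that $\alpha(d-2) = \tfrac{d-2}{2+(d-1)/p} = \tfrac{p(d-2)}{2p+d-1} \ge 1$ iff $p(d-2) \ge 2p + d - 1$ iff $d \ge \tfrac{4p-1}{p-1}$; but for the packing sum we need \emph{strict} convergence of $\sum_i i^{-\alpha(d-2)}$, which fails at the endpoint where $\alpha(d-2)=1$. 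To avoid this, at (and near) the endpoint I would instead pick $\alpha$ slightly below $\tfrac{1}{2+(d-1)/p}$: still $\alpha(2+(d-1)/p) < 1$ so the movement sum diverges (a $p$-series with exponent $<1$), while now $\alpha(d-2) < \tfrac{d-2}{2+(d-1)/p}$, which is $> 1$ when $d > \tfrac{4p-1}{p-1}$ and $=1$ when $d = \tfrac{4p-1}{p-1}$ — so one more small perturbation handles the boundary, or one just observes $d$ is an integer and $\tfrac{4p-1}{p-1}$ is generically not, so $d > \tfrac{4p-1}{p-1}$ strictly whenever the hypothesis $d \ge \tfrac{4p-1}{p-1}$ holds nontrivially. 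Beyond this bookkeeping everything is a direct application of \Cref{lem:area-cap}, \Cref{lem:disj-caps}, and \Cref{lem:dp-large}.
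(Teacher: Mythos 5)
Your proposal takes essentially the same route as the paper: choose $\theta_i = c\,i^{-1/(2+(d-1)/p)}$, scale $c$ down so that $\sum_i \sigma(C(2\theta_i)) < \tfrac12$ via \Cref{lem:area-cap} so that \Cref{lem:disj-caps} produces the disjoint double-caps, and invoke \Cref{lem:dp-large} so the $D_p$-sum becomes a constant multiple of the divergent harmonic series. One small caveat on your endpoint discussion: the workarounds you sketch for $d=\tfrac{4p-1}{p-1}$ do not actually close the gap — taking $\alpha$ slightly below $\tfrac{1}{2+(d-1)/p}$ also pushes $\alpha(d-2)$ below $1$, so the packing sum diverges; and $\tfrac{4p-1}{p-1}$ can be an integer (e.g.\ $p=2$ gives $d=7$), so the genericity observation does not save it. This is, however, a wrinkle in the paper itself: the lemma is stated with ``$\ge$'' but its proof (and \Cref{thm:main}) use the strict inequality $d>\tfrac{4p-1}{p-1}$, which is what makes the exponent window nonempty, so your analysis is in line with what the paper actually proves.
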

	
\begin{proof}
Define 
\[\theta_i := \alpha \bigg(\frac{1}{i}\bigg)^{1/(2 + (d-1)/p)},\]
where $\alpha>0$ is a small constant (depending on $d$ and $p$) to be specified later.
Since $d > \frac{4p-1}{p-1}$, it follows that $\frac{d-2}{2+(d-1)/p} > 1$, whence using \Cref{lem:area-cap}
$$\sum_i \sigma(C(2\theta_i)) \le 2^{d-2}\gamma\sqrt{d}\cdot \sum_i \theta_i^{d-2} < \infty.$$
By choosing $\alpha$ small enough, this sum can be made less than $\frac{1}{2}$. Then by \Cref{lem:disj-caps} we can find $u_i$'s such that the double caps $\{C_{\pm}(u_i,\theta_i)\}_{i\in \N}$ satisfy condition $(i)$. By \Cref{lem:dp-large}, we have
\[\sum_{i=1}^\infty D_p(B, B - \bar{C}(u_i,\theta_i))
	\ge \sum_{i=1}^\infty f(d,p) \cdot \theta_i^{2+(d-1)/p}
	= f(d,p)\cdot \alpha^{2+(d-1)/p} \sum_{i=1}^\infty \frac{1}{i}
	= \infty.\]
Hence $(ii)$ is satisfied.
\end{proof}

\begin{proof}[of \Cref{prop:hard-instance}]
Finally we  prove \Cref{prop:hard-instance}. With sequences $\theta_1, \theta_2, \dots$ and $u_1,u_2,\dots$ as given by \Cref{lem:packing}, define the nested sequence $K_1 \supseteq K_2 \supseteq \dots$ of convex bodies by starting from the unit ball $B$ and sequentially removing the double-caps $\bar{C}_{\pm}(u_i, \theta_i)$:
\begin{equation}
\begin{tabular}{r c l}
$K_1$ & $=$ & $B$, \\
$K_{i+1}$ & $=$ & $K_i - \bar{C}_{\pm}(u_i,\theta_i), \qquad \forall i\ge 1$
\end{tabular} \label{eq:def-of-Kt}
\end{equation}
It is clear that each $K_i$ is centrally symmetric. For all $i$, 
\[D_\infty(B, K_i) \,\le\, \max_{j < i}\ (1-\cos(\theta_j)) \,\le\, \eps,\]
and so we obtain \Cref{prop:hard-instance}(I).

Moreover, since the double-caps are disjoint by \Cref{lem:packing}($i$), for any direction $y$ it holds that 
\begin{equation*}
h_{K_i}(y) - h_{K_{i+1}}(y) 
	~=~ h_{B}(y) - h_{B - \bar{C}_\pm (u_i,\theta_i)}(y),
\end{equation*}
because if $y \in \bar{C}_\pm (u_i,\theta_i)$ then $h_{K_i}(y) = h_{B}(y)$ and $h_{K_{i+1}}(y) =h_{B - \bar{C}_\pm (u_i,\theta_i)}(y)$, and otherwise both sides are zero.
Therefore, 
\begin{equation}
D_p(K_i, K_{i+1}) = D_p(B, B - \bar{C}_\pm (u_i,\theta_i)) = 2 D_p(B, B - \bar{C} (u_i,\theta_i)). \label{eq:distKs}
\end{equation}
It now follows from \Cref{lem:packing}$(ii)$ that $\sum_i D_p(K_i,
K_{i+1}) = \infty$, thus giving \Cref{prop:hard-instance}(II). 

The last item \Cref{prop:hard-instance}(III), namely that the distances $D_p(K_i,K_{i+1})$ are decreasing over $i$, also follows from \eqref{eq:distKs} and the fact the $\theta_i$'s are also decreasing over $i$. This completes the proof of
\Cref{prop:hard-instance}.
\end{proof}


\section{Defining the Selector}
\label{sec:extension}

Throughout this section, we fix $p > 1$, dimension $d\ge \frac{4p-1}{p-1}$, and $\e\in(0,1)$ to be specified later ($\e$ will be a function of $d$ and $p$). We will need the following lemma:

\begin{lemma}\label{lem:contains-ball}
Let $K\sse B$ be a nonempty convex compact set, and let $\rho:= D_\infty(K,B)$. Then $K$ contains the ball $B(0,1-\rho)$.
\end{lemma}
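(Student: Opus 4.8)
The plan is to reduce everything to the support‑function description of $D_\infty$ and then invoke the separating hyperplane theorem. Since $K\subseteq B$, for every direction $y\in S^{d-1}$ we have $h_K(y)\le h_B(y)=1$, so $|h_K(y)-h_B(y)|=1-h_K(y)$ and therefore
$\rho = D_\infty(K,B)=\max_{y\in S^{d-1}}\bigl(1-h_K(y)\bigr)$.
In particular this gives the uniform lower bound $h_K(y)\ge 1-\rho$ for all $y\in S^{d-1}$, which is the only consequence of the hypothesis that we will use.

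Next I would argue by contradiction. Suppose some point $z$ with $\|z\|\le 1-\rho$ fails to lie in $K$. Since $K$ is nonempty, compact, and convex and $z\notin K$, the separating hyperplane theorem yields a unit vector $y\in S^{d-1}$ with $\langle z,y\rangle > \max_{x\in K}\langle x,y\rangle = h_K(y)$. On the other hand, Cauchy–Schwarz gives $\langle z,y\rangle \le \|z\|\,\|y\| = \|z\| \le 1-\rho$. Combining this with the bound $h_K(y)\ge 1-\rho$ from the first step produces $1-\rho < \langle z,y\rangle \le 1-\rho$, a contradiction. Hence every point $z$ with $\|z\|\le 1-\rho$ lies in $K$, i.e. $B(0,1-\rho)\subseteq K$, which is the claim.

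There is essentially no serious obstacle here: the proof is two lines once the right facts are lined up. The only points needing a moment's care are that the maximum defining $h_K(y)$ is actually attained (true because $K$ is compact) and that the separation of the point $z$ from $K$ is strict (true because $K$ is closed and $z\notin K$), so the inequality $\langle z,y\rangle > h_K(y)$ is genuinely strict. The regime $\rho\ge 1$ is handled automatically, since then $B(0,1-\rho)$ is empty (or, when $\rho=1$, the single point $0$, which the same argument places in $K$), so no separate discussion is required.
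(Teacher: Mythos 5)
Your proof is correct. It follows the same overall strategy as the paper --- separate the offending point $z$ from $K$ by a hyperplane, then use $\|z\|\le 1-\rho$ and Cauchy--Schwarz to reach a contradiction --- but it argues directly from the support-function definition $D_\infty(K,B)=\max_{y\in S^{d-1}}(h_B(y)-h_K(y))=\max_{y}(1-h_K(y))$, extracting the uniform bound $h_K(y)\ge 1-\rho$ and showing the separating direction would violate it. The paper instead first passes to the classical max-min characterization $D_\infty(K,B)=\max_{x\in B}\min_{y\in K}\|x-y\|$ (citing Schneider, Lemma~1.8.14) and then shows that the unit normal $u$ of the separating halfspace is a point of $B$ with $\min_{y\in K}\|u-y\|>\rho$. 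Your route is slightly more self-contained: it needs no auxiliary equivalence between the two characterizations of Hausdorff distance, since the separating hyperplane's normal direction directly witnesses a support-function gap exceeding $\rho$. Geometrically the two arguments are the same, differing only in which description of $D_\infty$ they manipulate.
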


\begin{proof}
Before stating the proof we recall that the Hausdorff distance $D_{\infty}$ between $B$ and $K$ is also given by the classic formula (simplified since $K \subseteq B$) 
\begin{align}
	D_{\infty}(K,B) = \max_{x \in B} \min_{y \in K} \|x - y\|, \label{eq:hausEquiv}
\end{align} 
see Lemma 1.8.14 of~\cite{schneider}.

Now we prove the lemma in the contrapositive: if $K$ does not contain $B(0,1-\rho)$ then $D_\infty(K,B) > \rho$. Suppose there exists a point $z$ in $B(0,1-\rho)$ but not in $K$. Then there is a halfspace $\{x : \la u,x\ra \le \alpha\}$ that contains $K$ but not $z$. By rescaling, we can assume without loss of generality that $\|u\| = 1$.

Using the fact $z$ does not belong to this halsfpace, Cauchy-Schwarz inequality, and then that $z \in B(0,1-\rho)$, we have
\[\alpha < \la u,z \ra \le \|z\| \le 1-\rho.\]
Thus, for every $y\in K$ we have $\la u,y\ra \le \alpha < 1- \rho$, and so Cauchy-Schwarz inequality gives 
\[ \|u - y\| \ge \la u - y, u \ra > \rho.\]
Since $u\in B$, this implies that $\max_{x \in B} \min_{y \in K} \|x-y\| > \rho$, and from \eqref{eq:hausEquiv} we get $D_{\infty}(K,B) > \rho$ as desired. 
\end{proof}

\subsection{Partial selector}
\label{sec:partial-selector}

Let the sequence $K_1,K_2,\dots$ be as defined in~\eqref{eq:def-of-Kt},
which satisfies the properties in \Cref{prop:hard-instance}.
\begin{proposition}\label{prop:partial-selector}
For any $\eta>0$, there is partial selector $\bar{s}:\{K_i\}_{i\in \N}\to \R^d$ that is $2\eta$-Lipschitz with respect to $D_p$ and non-competitive.
\end{proposition}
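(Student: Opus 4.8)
The plan is to build $\bar s$ by moving along a fixed line segment as we go down the nested sequence, with the arc-length traveled between $K_i$ and $K_{i+1}$ proportional to $D_p(K_i,K_{i+1})$. Concretely, fix a unit vector $v$ (say $v = e_1$) and set $\bar s(K_i) := \left(\eta \sum_{j=1}^{i-1} D_p(K_j,K_{j+1})\right) v$, with $\bar s(K_1) := 0$. Since $\sum_j D_p(K_j,K_{j+1}) = \infty$ by \Cref{prop:hard-instance}(II), the points $\bar s(K_i)$ march off to infinity along the ray $\R_{\ge 0} v$, so $\sum_i \|\bar s(K_{i+1}) - \bar s(K_i)\| = \eta \sum_i D_p(K_i,K_{i+1}) = \infty$; thus $\bar s$ is non-competitive. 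For the Lipschitz bound, take any $i < j$: then $\|\bar s(K_i) - \bar s(K_j)\| = \eta \sum_{k=i}^{j-1} D_p(K_k,K_{k+1})$, and I would like to compare this with $D_p(K_i,K_j)$. This is where \Cref{prop:hard-instance}(III) (the distances $D_p(K_k,K_{k+1})$ are decreasing) and the disjointness of the double-caps enter: because the caps removed between $K_i$ and $K_j$ are disjoint, the support-function difference $h_{K_i} - h_{K_j}$ is supported on the disjoint union of the double-caps $\bar C_\pm(u_k,\theta_k)$ for $i \le k < j$, so $D_p(K_i,K_j)^p = \sum_{k=i}^{j-1} D_p(K_k,K_{k+1})^p$. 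Hence it suffices to show that $\sum_{k} a_k \le 2\bigl(\sum_k a_k^p\bigr)^{1/p}$ fails in general — so I cannot use a purely $\ell_p$-vs-$\ell_1$ inequality, and instead I must use that the $a_k := D_p(K_k,K_{k+1})$ are nonincreasing in $k$ \emph{over the whole sequence}.

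The key elementary inequality I would isolate is: if $a_1 \ge a_2 \ge \cdots \ge 0$ and $\sum_k a_k^p < \infty$ (with $p > 1$), then for any $i \le j$, $\sum_{k=i}^{j-1} a_k \le c_p \bigl(\sum_{k=i}^{j-1} a_k^p\bigr)^{1/p}$ for some constant $c_p$ — \emph{no}, this is still false (take $a_k$ constant on a long block). The correct statement must instead bound $\sum_{k\ge i} a_k$ against $a_i^{1-p}\cdot(\text{something})$, or rather I should compare $\sum_{k=i}^{j-1}a_k$ with $D_p(K_i,K_j)$ directly using the explicit form $\theta_k \asymp \alpha k^{-1/(2+(d-1)/p)}$ from the proof of \Cref{lem:packing}, together with $D_p(K_k,K_{k+1}) \asymp \theta_k^{2+(d-1)/p} \asymp 1/k$ and $D_p(K_i,K_j)^p = \sum_{k=i}^{j-1} D_p(K_k,K_{k+1})^p \asymp \sum_{k=i}^{j-1} k^{-p}$. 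Then $\sum_{k=i}^{j-1} D_p(K_k,K_{k+1}) \asymp \log(j/i) \wedge H_j$, whereas $D_p(K_i,K_j) \asymp (i^{-(p-1)} - j^{-(p-1)})^{1/p} \vee \dots$; for $i=1$ this gives $\sum_k a_k = \Theta(\log j) \to \infty$ while $D_p(K_1,K_j) = \Theta(1)$, which would \emph{break} the Lipschitz property. So the uniform-speed-along-a-line construction is too crude, and I expect the actual construction to instead let $\bar s$ traverse a more clever (bounded-diameter) path — e.g. a zig-zag or a space-filling-type polygonal path inside a tiny ball around the origin — whose arc length between $K_i$ and $K_j$ is $\Theta(D_p(K_i,K_j))$ rather than $\Theta(\sum_{k=i}^{j-1} D_p(K_k,K_{k+1}))$.

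So the refined plan is: choose points $y_i \in \R^d$ with $\|y_{i+1} - y_i\| = \eta\, D_p(K_i,K_{i+1})$ (consecutive spacing matching $D_p$), but arranged so that for \emph{all} $i<j$ one also has $\|y_j - y_i\| \le 2\eta\, D_p(K_i,K_j)$. Using $D_p(K_i,K_j)^p = \sum_{k=i}^{j-1} D_p(K_k,K_{k+1})^p$ and the fact that $(a_k)$ is nonincreasing, one has $D_p(K_i,K_j) \ge a_i = D_p(K_i,K_{i+1})$ and more generally $D_p(K_i,K_j) \ge (\text{largest tail in }\ell_p)$; a greedy/recursive placement — place $y_{i+1}$ on the sphere of radius $\eta a_i$ around $y_i$, in a direction chosen (by an averaging/volume argument in the ambient $\R^d$, using $d$ large) nearly orthogonal to all of $y_i - y_1, \dots, y_i - y_{i-1}$ — should give $\|y_j-y_1\|^2 \approx \eta^2\sum_{k=1}^{j-1} a_k^2 \le \eta^2 (\sum a_k^p)^{2/p} \cdot(\dots)$; here I'd invoke $\sum_{k<j} a_k^2 \le a_1^{2-p}\sum_{k<j} a_k^p$ since $2 \ge p$ is false in general, so for $p\in(1,2)$ use $\sum a_k^2 \le (\sum a_k^p)^{2/p}$ by monotonicity of $\ell_q$ norms ($2 > p$), and for $p \ge 2$ use $\sum a_k^2 \le a_1^{2-p}\sum a_k^p$. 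Either way $\|y_j - y_1\| \le C\eta\, D_p(K_1,K_j)$, and the same argument applied to any sub-block $[i,j)$ gives the full $2\eta$-Lipschitz bound (after absorbing constants into $\eta$ by rescaling, or by being careful with the near-orthogonality so the constant is genuinely $\le 2$). Finally $\sum_i \|y_{i+1}-y_i\| = \eta\sum_i D_p(K_i,K_{i+1}) = \infty$, so $\bar s(K_i) := y_i$ is non-competitive. \textbf{The main obstacle} is exactly the tension I flagged: making consecutive steps have length $\asymp D_p(K_i,K_{i+1})$ while keeping \emph{all} pairwise distances $\lesssim D_p(K_i,K_j)$; this forces the near-orthogonal placement and is where the hypothesis that $d$ is large (so that many nearly-orthogonal directions exist) and $(D_p(K_i,K_{i+1}))_i$ is nonincreasing both get used, and it is where I'd expect the proof in the paper to do its real work.
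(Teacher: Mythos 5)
Your first instinct (march along a ray at speed $\propto D_p(K_i,K_{i+1})$) is correctly diagnosed as too crude, and your analysis that $D_p(K_i,K_j)^p = \sum_{k=i}^{j-1}D_p(K_k,K_{k+1})^p$ is right. However, your fallback construction --- placing $y_{i+1}$ at distance $\eta\,a_i$ from $y_i$ in a nearly orthogonal direction --- has a genuine gap for $p>2$. You invoke the inequality $\sum_{k\ge i} a_k^2 \le a_i^{2-p}\sum_{k\ge i} a_k^p$, but this runs the \emph{wrong way}: since $a_k \le a_i$ for $k\ge i$ and $2-p<0$, one has $a_k^{2-p}\ge a_i^{2-p}$, so in fact $\sum a_k^2 \ge a_i^{2-p}\sum a_k^p$. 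Concretely, with $a_k\asymp 1/k$ as in this construction, $\|y_j-y_i\|\asymp\bigl(\sum_{k=i}^{j-1}k^{-2}\bigr)^{1/2}\asymp i^{-1/2}$ while $D_p(K_i,K_j)\asymp i^{-(p-1)/p}$, and for $p>2$ we have $(p-1)/p>1/2$, so the Lipschitz ratio $\|y_j-y_i\|/D_p(K_i,K_j)\to\infty$ as $i\to\infty$. The orthogonal-placement scheme therefore only works for $p\in(1,2]$.

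The paper sidesteps all of this with a much simpler device. Rather than insist that \emph{consecutive} steps have length $\asymp D_p(K_i,K_{i+1})$ and then try to keep the cumulative drift under control, it makes the \emph{position} $\bar s(K_i)$ depend only on $\alpha_i := D_p(K_i,K_{i+1})$ and the parity of $i$: set $\bar s(K_i) = (-1)^{i+1}\eta\,\alpha_i\,e_1$. Then for $i<j$ one simply has $\|\bar s(K_i)-\bar s(K_j)\|\le\eta\alpha_i+\eta\alpha_j\le 2\eta\alpha_i\le 2\eta\,D_p(K_i,K_j)$, where the second step uses monotonicity of the $\alpha_i$'s (\Cref{prop:hard-instance}(III)) and the last uses nestedness. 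Consecutive moves have length $\eta(\alpha_i+\alpha_{i+1})\ge\eta\alpha_i$, whose sum diverges by \Cref{prop:hard-instance}(II). Note this requires no high-dimensional orthogonality argument, works for all $p>1$, and gets the explicit constant $2\eta$ directly rather than via an ``absorb the constants by rescaling'' step. The lesson is that the decoupling --- anchor each $\bar s(K_i)$ near the origin with radius $\eta\alpha_i$ rather than chain the $y_i$ together by a walk --- is what makes all pairwise distances collapse to a single dominant term.
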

\begin{proof}
Since an $\eta$-Lipschitz selector is also $\eta'$-Lipschitz for any $\eta'> \eta$, we may assume w.l.o.g.\ that $\eta < \frac{1-\e}{2\e}$. For $i \ge 1$, let $\alpha_i := D_p(K_i, K_{i+1})$. Now define a partial selector $\bar{s}: \{K_i\}_{i\in\N}\to \R^d$ by
\begin{equation}\label{eq:partial-selector}
\bar{s} (K_i) =
\left\lbrace
\begin{tabular}{cl}
$\eta \alpha_i \cdot e_1$ & if $i$ is odd\\
$-\eta \alpha_i \cdot e_1$ & if $i$ is even\\
\end{tabular}
\right. 
\end{equation}
Note that $\bar{s}$ is indeed a partial selector: by \Cref{prop:hard-instance}(I) we have $D_\infty(K_i,B)\le \eps$ for all $i$. This implies that each $K_i$ contains the ball $B(0,1-\e)$ (by Lemma~\ref{lem:contains-ball}) and that $\alpha_i \le 2\e$ (by triangle inequality). By the assumption that $\eta < \frac{1-\e}{2\e}$, it follows that $\|\bar{s}(K_i)\| < 1-\e$, whence $\bar{s}(K_i) \in K_i$.

\paragraph{Lipschitz.}
Fix $i < j$. Since the sets $K_i$ are nested we have $D_p(K_i, K_j) \ge D_p(K_i, K_{i+1}) = \alpha_i$. Also, as guaranteed by \Cref{prop:hard-instance}(III), the distances $\alpha_i$'s are decreasing over $i$. Putting these observations together we get 
\begin{equation*}
\|\bar{s}(K_i) - \bar{s}(K_j)\| 
	\,\le\, \|\bar{s}(K_i)\| + \|\bar{s}(K_j)\| 
	\,\le\,  \eta \alpha_i + \eta \alpha_j 
	\,\le\, 2 \eta \alpha_i
	\,\le\, 2 \eta D_p(K_i, K_{i+1}).
\end{equation*}
Thus, $\bar{s}$ is 2$\eta$-Lipschitz with respect to $D_p$.

\paragraph{Non-competitive.}
From the definition of $\bar{s}$ we have
\begin{equation*}
\|\bar{s}(K_i) - \bar{s}(K_{i+1})\| \,=\, \eta \alpha_i + \eta \alpha_{i+1} \,\ge\, \eta \alpha_i. 
\end{equation*}
This gives us the total movement
\begin{equation*}
\sum_{i=1}^\infty \|\bar{s}(K_i) - \bar{s}(K_{i+1})\|
	\ge \sum_{i=1}^\infty \eta \alpha_i
	=\infty,
\end{equation*}
where last step follows from \Cref{prop:hard-instance}(II). This concludes the proof. 
 \end{proof}

\subsection{Extension to a Full Selector}
\label{sec:full-selector}

We now extend the selector $\bar{s}$ from only the sets $\{K_i\}$ to all convex bodies to obtain the following more precise version of the statement of \Cref{thm:main-intro}.

\begin{theorem}[Restatement of \Cref{thm:main-intro}]  \label{thm:main}
For any $p \in (1,\infty]$, $d>\frac{4p-1}{p-1}$ and $\eps > 0$, there is a selector $s : \mathcal{K}^d \rightarrow \R^d$ that is $(5d + \eps)$-Lipschitz with respect to $D_p$ but is non-competitive. 
\end{theorem}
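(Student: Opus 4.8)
The plan is to extend the partial selector $\bar{s}$ from Proposition~\ref{prop:partial-selector} (taken with a suitable value of $\eta$) to a full selector on all of $\cK^d$. The core idea is the following ``nearest-landmark'' construction: for an arbitrary $K\in\cK^d$, look at how close $K$ is in $D_\infty$ to the sequence $\{K_i\}$. If $K$ is far from every $K_i$, output the Steiner point $\st(K)$, which is $d$-Lipschitz with respect to $D_p$ and safely inside $K$. If $K$ is close to some $K_i$, we want $s(K)$ to be a small perturbation of $\bar{s}(K_i)$, close enough that $s(K)\in K$ is still guaranteed. To interpolate smoothly between these two regimes I would use a Lipschitz partition-of-unity-style weighting: define $s(K)$ as a convex combination $\lambda(K)\,\bar{s}(K_{i(K)}) + (1-\lambda(K))\,\st(K)$, where $i(K)$ is (roughly) the index of the nearest $K_i$ and $\lambda(K)\in[0,1]$ is a Lipschitz ``proximity'' weight that is $1$ when $K$ is very close to $K_{i(K)}$ and $0$ once $K$ is moderately far from all the $K_i$. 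Because $\st(K)\in K$, $\bar{s}(K_i)\in B(0,1-\e)\subseteq K_i$, and $K$ is $D_\infty$-close to $K_i$ (hence by Lemma~\ref{lem:contains-ball} and a triangle inequality $K$ also contains a large ball), the convex combination stays inside $K$; this is where Item (I) of Proposition~\ref{prop:hard-instance} — the sets $K_i$ being nearly balls — is essential. Non-competitiveness is inherited for free, since $s(K_i)=\bar{s}(K_i)$ on the hard sequence and any extension of a non-competitive selector is non-competitive.

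The key steps, in order, are: (1) fix the parameters — choose $\e$ small (as a function of $d,p$) and $\eta$ within the range allowed by Proposition~\ref{prop:partial-selector}, so that $\bar{s}(K_i)$ lies deep inside $B(0,1-\e)$; (2) define the ``distance to landmark $i$'' as $r_i(K):=D_\infty(K,K_i)$ and show, using that the $K_i$ are nested with $D_\infty(B,K_i)\le\e$, that the sets $K$ within $D_\infty$-distance (say) $\e$ of $K_i$ for different $i$ are handled consistently — in particular argue that on the region where the construction is ``active'' the relevant nearest index is essentially unique or the competing landmarks give nearly the same value, so no conflict arises; (3) define $\lambda(K)$ via a truncated linear function of $\min_i r_i(K)$ (something like $\lambda(K) = \max(0, 1 - c\cdot \min_i r_i(K))$ for an appropriate constant $c$), which is Lipschitz in $K$ with respect to $D_\infty$ and hence with respect to $D_p$ after paying a dimension-dependent factor; (4) set $s(K):=\lambda(K)\bar{s}(K_{i(K)})+(1-\lambda(K))\st(K)$ and verify $s(K)\in K$ using Lemma~\ref{lem:contains-ball}; (5) bound the Lipschitz constant of $s$ by combining the Lipschitz constants of $\lambda$, $\bar{s}$, $\st$, and the diameter bounds ($\|\bar{s}(K_i)\|,\|\st(K)\|\le 1$), using the product rule for convex combinations, to arrive at $5d+\eps$; (6) note non-competitiveness as above. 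One subtle point in step (2)/(4) is relating $D_\infty(K,K_i)$ to $D_\infty(K,K_j)$ when $K$ is near the ``boundary'' of two landmark regions: because the $\theta_i$ are decreasing and the caps disjoint, consecutive $K_i$ are themselves $D_\infty$-close for large $i$, so $\bar{s}(K_i)$ and $\bar{s}(K_{i+1})$ — though of opposite sign — both have small norm $\le\eta\alpha_i$, so even a ``wrong'' choice of landmark keeps $s(K)$ well inside $K$; this is exactly the role of Item (III) and the smallness of $\alpha_i$.

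The main obstacle I expect is making the Lipschitz estimate in step (5) clean enough to hit the explicit constant $5d+\eps$, while simultaneously ensuring $s(K)\in K$ everywhere. These two goals pull in opposite directions: to keep $s(K)$ inside $K$ we want $\lambda$ to transition slowly (so the construction is ``gentle''), but a slow transition in a region of width $\delta$ forces a Lipschitz contribution of order (diameter)/$\delta$, which is large unless $\delta$ is not too small; conversely the region where $\lambda>0$ must be narrow enough (in $D_\infty$) that $K$ genuinely contains a ball large enough to absorb the combination. Balancing these requires choosing $\e$, $\eta$, and the transition width $\delta$ carefully as functions of $d$ and $p$, and then checking that the worst-case Lipschitz bound — coming from pairs $K,K'$ straddling the transition, one landmark-dominated and one Steiner-dominated — telescopes to $d$ (from $\st$) plus lower-order terms that can be pushed below $4d+\eps$. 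The secondary obstacle is purely bookkeeping: verifying that $D_\infty$-closeness of $K$ and $K_i$ really does let us conclude $K\supseteq B(0,1-2\e)$ or similar, so that a point of norm $\le 1-2\e$ (which $s(K)$ will be, since both $\bar s(K_i)$ and $\st(K)$ can be shown to have norm bounded away from $1$ in the active region) lies in $K$; this is a direct application of Lemma~\ref{lem:contains-ball} plus the triangle inequality $D_\infty(B,K)\le D_\infty(B,K_i)+D_\infty(K_i,K)$.
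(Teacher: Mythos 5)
Your overall strategy — define $s(K)$ as a convex combination of the Steiner point $\st(K)$ and a landmark value $\bar{s}(K_i)$, weighted by a Lipschitz bump that activates only when $K$ is close to $K_i$ — is exactly the paper's construction. However, two of your specific design choices break the argument, and the paper makes different choices precisely to avoid these pitfalls.

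\textbf{Measuring proximity in $D_\infty$ is the wrong direction.} You propose $\lambda(K)=\max(0,1-c\cdot\min_i D_\infty(K,K_i))$ and claim it is ``Lipschitz with respect to $D_\infty$ and hence with respect to $D_p$ after paying a dimension-dependent factor.'' This implication fails: since $D_p\le D_\infty$ (normalized measure), an inequality $|\lambda(K)-\lambda(K')|\le c\,D_\infty(K,K')$ gives no upper bound at all in terms of $D_p(K,K')$. The correct comparison (\Cref{fact:compDp}) only bounds a \emph{power} of $D_\infty$ by $D_p$, which yields H\"older, not Lipschitz, control. Indeed \Cref{fct:no-dp-ext} is exactly a demonstration that the two metrics can be wildly mismatched near the unit ball, which is where all your $K_i$ live. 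The bump function must be built from $D_p$-distance, as the paper does: $f_i(K)=\max(1-D_p(K,K_i)/r_i,\,0)$. Lemma~\ref{lem:contains-ball} needs $D_\infty$, so the conversion from $D_p$-closeness back to $D_\infty$-closeness (via \Cref{fact:compDp}) is done once, for the selector property, where only a crude bound is needed.

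\textbf{A fixed transition width $\delta$ does not work; the widths must shrink and the supports must be disjoint.} Because $\sum_i D_p(K_i,K_{i+1})<\infty$ would be needed for convergence --- but here the sum diverges --- you cannot use that; nevertheless the individual gaps $\alpha_i=D_p(K_i,K_{i+1})\to 0$ (they are decreasing and $\sum\theta_i^{d-2}<\infty$), so for large $i$ the landmarks $K_i$ are $D_p$-close to one another. With a fixed width $\delta$, the ``active regions'' around distinct landmarks overlap, and the nearest-landmark index $i(K)$ jumps between $i$ and $i+1$ across a set of arbitrarily small $D_p$-diameter. Since $\bar{s}(K_i)=+\eta\alpha_i e_1$ and $\bar{s}(K_{i+1})=-\eta\alpha_{i+1}e_1$ have opposite signs, the term $\lambda(K)\bar s(K_{i(K)})$ then has a jump of size $\approx\eta(\alpha_i+\alpha_{i+1})$ across a $D_p$-gap much smaller than $\alpha_i$, so the Lipschitz constant is unbounded. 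Your remark that a ``wrong'' landmark still keeps $s(K)\in K$ addresses only the selector property, not Lipschitzness. The paper's fix is to set $r_i:=\alpha_i/2$; then $r_i+r_j<\alpha_i\le D_p(K_i,K_j)$ (using \Cref{prop:hard-instance}(III)), so the supports $\cB_p(K_i,r_i)$ are pairwise disjoint (\Cref{lem:disjoint-balls}), the landmark choice is unambiguous, and — crucially — the $\tfrac1{r_i}$-Lipschitz constant of $f_i$ is exactly compensated by $\|\bar s(K_i)\|=\eta\alpha_i=2\eta r_i$, yielding a uniform bound. This coupling of the bump width to $\alpha_i$ is the ingredient your sketch is missing. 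Once you adopt $D_p$-balls with these shrinking radii, your interpolation $\lambda(K)\bar s(K_{i(K)})+(1-\lambda(K))\st(K)$ coincides with the paper's $s(K)=\st(K)+\sum_i f_i(K)(\bar s(K_i)-\st(K))$, and the rest of your outline (non-competitiveness by extension, $s(K)\in K$ via \Cref{lem:contains-ball}) goes through.
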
	

When $p=\infty$, we can use known extension results (e.g. Theorem 18 of \cite{extensionSelector}) to extend our partial selector $\bar{s}$ to a selector whose Lipschitz constant is within a constant factor of optimality. However, the following fact shows that this is not possible for any $p<\infty$. 

\begin{restatable}{fact}{NoDpExt}
  \label{fct:no-dp-ext}
Fix $p \in [1,\infty)$ and $d > 2p+2$. Let $\hat{s}$ be the partial selector defined only on the unit ball $B$ with $\hat{s}(B) = e_1$ (which is trivially Lipschitz). Then $\hat{s}$ cannot be extended to a selector which is Lipschitz with respect to $D_p$.
\end{restatable}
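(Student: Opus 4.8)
The plan is to argue by contradiction: suppose $s$ is a selector extending $\hat{s}$ that is $L$-Lipschitz with respect to $D_p$. The key idea is that the hard nested sequence from \Cref{prop:hard-instance} already lives inside the unit ball $B$ with $D_\infty(B, K_i) \le \eps$ small, and we can leverage the fact that $s(B) = e_1$ is forced to be a specific extreme point. First I would invoke \Cref{prop:hard-instance} (legitimate since $d > 2p+2$ is at least as strong as $d \ge \frac{4p-1}{p-1}$ for the relevant range, modulo checking the exact inequality — if needed one takes $d$ slightly larger, or notes $2p+2 > \frac{4p-1}{p-1}$ for $p$ not too small) to obtain a nested sequence $B \supseteq K_1 \supseteq K_2 \supseteq \cdots$ of centrally symmetric bodies with $\sum_i D_p(K_i, K_{i+1}) = \infty$ and $D_\infty(B, K_i) \le \eps$.

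Next, the point: since $\hat{s}(B) = e_1$ is an \emph{exposed} point of $B$ (it is the unique maximizer of $\langle e_1, \cdot\rangle$ over $B$), and the bodies $K_i$ are obtained from $B$ by removing double caps away from $e_1$ — or more precisely, by choosing the centers $u_i$ of the removed double caps to avoid a fixed small cap $C(e_1, \delta)$ around $e_1$ — the body $K_i$ still contains $e_1$ on its boundary, and in fact agrees with $B$ near $e_1$. I would modify the packing argument (Lemma~\ref{lem:disj-caps}) slightly so that all double-caps $C_\pm(u_i, \theta_i)$ also avoid $C(e_1, \delta)$ for some fixed small $\delta$; this costs only a constant in the volume budget and does not affect $\sum_i D_p(K_i, K_{i+1}) = \infty$. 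Consequently $e_1 \in K_i$ for every $i$, but $e_1$ is not in the relative interior — it is a fixed point on the boundary. The crucial tension is that the extension $s$ must satisfy $s(K_i) \in K_i$, so $\|s(K_i) - s(K_{i-1})\| \le L \cdot D_p(K_{i-1}, K_i)$, and summing, the total movement $\sum_i \|s(K_{i+1}) - s(K_i)\|$ is finite only if $\sum_i D_p(K_i, K_{i+1})$ is finite — contradiction. Wait: that shows $s$ is not competitive, but the claim is stronger, that $s$ cannot even \emph{exist} as a Lipschitz selector; so Lipschitzness alone already forces finite total movement along \emph{any} sequence, and we have a sequence with infinite $D_p$-length but bounded diameter ($\text{diam}(K_i) \le 2$), so the $s(K_i)$ must be Cauchy — that is fine, no contradiction yet. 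So the real argument must exploit that $s(B) = e_1$ specifically.

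Here is the refined contradiction. Consider instead a sequence that oscillates: I would build, for each scale, a pair of bodies $K, K'$ both containing $e_1$ on their boundary and both close to $B$ in $D_\infty$, but such that $s(K)$ and $s(K')$ are forced far apart. Concretely, start from $B$ and cut a double-cap on one side to get $K$; the constraint $s(K) \in K$ together with $D_p(B,K)$ being small forces $s(K)$ to be near $e_1$ (since $s(B)=e_1$ and $s$ is Lipschitz). Then cut a \emph{complementary} pair of double-caps that, intersected with the region near $e_1$, force $s$ of the resulting body to be pulled off $e_1$ by an amount that is large relative to $D_p$. The point is that near an exposed point, removing a cap of angular radius $\theta$ changes $D_p$ by only $\Theta(\theta^{2+(d-1)/p})$ (Lemma~\ref{lem:dp-large}) but can force the selector to move by $\Theta(\theta^2)$ (the depth of the cap along the $e_1$ axis), and since $2 < 2 + (d-1)/p$, the ratio $\theta^2 / \theta^{2+(d-1)/p} = \theta^{-(d-1)/p} \to \infty$ as $\theta \to 0$. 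Chaining infinitely many such forced moves at geometrically decreasing scales $\theta_i$, chosen so that $\sum_i \theta_i^{2+(d-1)/p} < \infty$ (convergent, using $d > 2p+2 \Rightarrow (d-1)/p > 2$, wait we need $2 + (d-1)/p$ to give a convergent tail when $\theta_i \sim i^{-c}$... — actually we want a \emph{divergent} sum of moves and a \emph{finite} Lipschitz bound, so I want $\sum_i \theta_i^2 = \infty$ while $\sum_i \theta_i^{2+(d-1)/p} < \infty$, which is possible precisely because $d > 2p+2$ makes the exponents straddle the critical value giving convergence for one and divergence for the other), we get $\sum_i \|s(\text{body}_i) - s(\text{body}_{i+1})\| \ge c\sum_i \theta_i^2 = \infty$ while the Lipschitz bound gives $L \sum_i D_p \le L \sum_i \theta_i^{2+(d-1)/p} < \infty$, a contradiction.

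The main obstacle, and the step requiring the most care, is the \emph{forcing} lemma: showing that after removing an appropriate cap (or double-cap) near $e_1$, any point in the resulting body that is $D_p$-close (hence, by Lipschitzness and the anchor $s(B) = e_1$, forced to be spatially close) to $e_1$ must in fact have moved a distance $\Omega(\theta^2)$ from $e_1$. This requires a geometric argument that the removed cap "eats" the neighborhood of $e_1$ in the body, pushing the frontier inward by $\Theta(\theta^2)$ along the $e_1$ direction — essentially the same $\cos(\theta) \approx 1 - \theta^2/2$ estimate used in Lemma~\ref{lem:dp-large}, but now applied to locate $s$ rather than to bound $D_p$. One must also track that the Lipschitz bound propagates correctly down the chain — i.e., that $\|s(\text{body}_i) - e_1\|$ stays controlled so that the "close to $e_1$" hypothesis of the forcing lemma remains valid at every scale — which is where the convergence $\sum_i \theta_i^{2+(d-1)/p} < \infty$ is used a second time (to bound the accumulated drift of $s$ away from $e_1$ along the even-indexed bodies that return close to $B$). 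Assembling these pieces, the contradiction follows and $\hat{s}$ admits no Lipschitz extension.
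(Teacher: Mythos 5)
Your opening two paragraphs are dead ends, as you yourself recognize: Lipschitzness is a pointwise condition and does not cap total movement along an infinite chain, so \Cref{prop:hard-instance} alone yields no contradiction, and the modification to the packing is never used. The correct insight is buried in your third paragraph — near the exposed point $e_1$, cutting a cap of angular radius $\theta$ forces the selector to retreat by $\Theta(\theta^2)$ while $D_p$ shrinks faster — but you then wrap it in an unnecessary infinite oscillation/chaining argument with a delicate drift-tracking step. None of that is needed: the paper simply takes the one-parameter family $K_\theta := B - \bar{C}(e_1,\theta)$ and observes that $\|s(K_\theta) - s(B)\| / D_p(K_\theta,B) \to \infty$ as $\theta \to 0^+$, which contradicts any finite Lipschitz constant with a \emph{single} body. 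Your ``forcing lemma,'' which you flag as the hard step, is actually a one-liner: $s(K_\theta) \in K_\theta$ forces $\ip{e_1}{s(K_\theta)} \le h_{K_\theta}(e_1) = \cos\theta$, hence $\|s(K_\theta) - e_1\| \ge 1 - \cos\theta = \Omega(\theta^2)$.

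There is also a genuine gap. You claim the $D_p$-cost of a cap cut is ``only $\Theta(\theta^{2 + (d-1)/p})$'' and cite \Cref{lem:dp-large}, but that lemma is a \emph{lower} bound on $D_p$; the contradiction requires an \emph{upper} bound, which you never establish. The paper sidesteps the sharp estimate and uses the crude bound $|h_B - h_{K_\theta}| \le 1$ together with \Cref{lem:area-cap}, giving $D_p(K_\theta, B) \le \sigma(C(e_1,\theta))^{1/p} = O(\theta^{(d-2)/p})$. This is exactly where $d > 2p+2$ enters: it is the condition making the exponent $(d-2)/p$ exceed $2$, so that $\theta^2 / \theta^{(d-2)/p} \to \infty$. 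Your attribution of the hypothesis to the convergence of $\sum \theta_i^{2+(d-1)/p}$ versus $\sum \theta_i^2$ is incorrect — that straddle is possible for every $d \ge 2$ and $p \ge 1$ and does not use the hypothesis at all. (Your sharper estimate, if proved, would eliminate the dimension restriction, but you would need to carry out the upper-bound computation along the lines of the integral in the proof of \Cref{lem:dp-large}.)
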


We defer the proof to the appendix. In light of this fact, we extend \textbf{just our partial selector} $\bar s$ with a customized argument. As mentioned before, our argument relies on $\bar{s}(K_i)$ being deep within each $K_i$ (contrasted to the selector from Fact \ref{fct:no-dp-ext}, which is on the boundary) so that for a set $K \approx K_i$ it suffices for the extension to have $s(K) \approx s(K_i) = \bar{s}(K_i)$ to ensure that $s(K)$ is inside $K$. The sets $K$ distant from all $K_i$'s are actually easier to handle, since we can have $s(K)$ quite different from $s(K_i) = \bar{s}(K_i)$ without making $s$ non-Lipschitz.

At a high level, our final selector $s(K)$ is the following. Recall that the Steiner point selector $\st(\cdot)$ is $d$-Lipschitz with respect to $D_p$ for each $p\in [1,\infty]$ (see the remark following definition~(\ref{def:st-2})). Then:

\begin{itemize}\setlength{\itemsep}{0pt}
\item If $K$ is not close to any $K_i$, set $s(K) = \st(K)$. 				
\item If $K$ is close to some $K_i$, set $s(K)$ to be a linear interpolation between $\st(K)$ and $\bar{s}(K_i)$.
\end{itemize}
	
Formally, let $\cB_p(K, r) := \{K'\in \cK: D_p(K,K') \le r\}$ be the ball of radius $r$ centered at $K$ in the metric space $(\cK, D_p)$. As before, define $\alpha_i  := D_p(K_i, K_{i+1})$ and let $r_i := \frac{\alpha_i}{2}$. Define the following \emph{bump functions}:
\[f_i(K) := \max\left\{1 - \frac{D_p(K,K_i)}{r_i}, 0 \right\}.\]

Notice that $f_i$ takes value $1$ at $K_i$, value $0$ outside of $\cB_p(K_i,r_i)$, and interpolates linearly on $\cB_p(K_i,r_i)$.
Our extension is:
\begin{equation}
s(K) := \st(K) + \sum_i f_i(K) \cdot (\bar{s}(K_i)-\st(K)),
\end{equation}
where $\bar s$ is the partial selector of \Cref{prop:partial-selector} with $\eta = \frac{\eps}{4}$ and $\e > 0$ is sufficiently small (as a function of $d$ and $p$), and in particular $\e < 1$. Notice that nothing is lost in the statement of the theorem by only working with such $\e$.

To complete the proof of \Cref{thm:main}, we will show the following:
\begin{itemize}\setlength{\itemsep}{0pt}
	\item $s$ extends $\bar{s}$ (hence $s$ is non-competitive).
	\item $s$ is a selector, i.e. $s(K)\in K$ for all $K\in \cK$.
	\item $s$ is $(5d + \e)$-Lipschitz with respect to $D_p$.
\end{itemize}

\begin{lemma} \label{lem:disjoint-balls}
The balls $\cB_p(K_i, r_i)$ are pairwise disjoint.
\end{lemma}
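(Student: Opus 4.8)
The plan is to argue by contradiction, combining the triangle inequality for the metric $D_p$ with the observation (already used in the proof of \Cref{prop:partial-selector}) that nested bodies are $D_p$-far apart. Suppose toward a contradiction that for some indices $i<j$ there is a body $K'\in\cB_p(K_i,r_i)\cap\cB_p(K_j,r_j)$. Since $D_p$ is a metric, this gives
\[
D_p(K_i,K_j)\;\le\;D_p(K_i,K')+D_p(K',K_j)\;\le\;r_i+r_j\;=\;\tfrac{\alpha_i+\alpha_j}{2}.
\]
For the matching lower bound I would use that $K_i\supseteq K_{i+1}\supseteq\dots\supseteq K_j$, so $h_{K_i}(y)\ge h_{K_{i+1}}(y)\ge h_{K_j}(y)$ for every $y\in S^{d-1}$, whence $|h_{K_i}(y)-h_{K_j}(y)|\ge|h_{K_i}(y)-h_{K_{i+1}}(y)|$ pointwise; integrating the $p$-th powers over $S^{d-1}$ yields $D_p(K_i,K_j)\ge D_p(K_i,K_{i+1})=\alpha_i$. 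Putting the two bounds together forces $\alpha_i\le\tfrac{\alpha_i+\alpha_j}{2}$, i.e.\ $\alpha_i\le\alpha_j$ for $i<j$.

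The final step is to observe that this contradicts the fact that the sequence $\alpha_1,\alpha_2,\dots$ is \emph{strictly} decreasing. This is the one point that needs a little care: \Cref{prop:hard-instance}(III) only asserts that the $\alpha_i$ are decreasing, so I would spell out why the decrease is strict. By \eqref{eq:distKs}, $\alpha_i=2\,D_p(B,B-\bar C(u_i,\theta_i))$, and in the proof of \Cref{lem:packing} the angles are chosen as $\theta_i=\alpha(1/i)^{1/(2+(d-1)/p)}$, which is strictly decreasing in $i$. It therefore suffices to note that $\theta\mapsto D_p(B,B-\bar C(u,\theta))$ is strictly increasing: if $\theta<\theta'$ then $\bar C(u,\theta)\subsetneq\bar C(u,\theta')$, so $h_B-h_{B-\bar C(u,\theta')}\ge h_B-h_{B-\bar C(u,\theta)}\ge 0$ pointwise, with strict inequality on the set of directions in $C(u,\theta')$ but not in $\overline{C(u,\theta)}$, which has positive $\sigma$-measure $\sigma(C(\theta'))-\sigma(C(\theta))>0$ since the angular density in \Cref{fact:prob-bd-2} is strictly positive on $(0,\pi)$. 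Integrating then gives the strict inequality $D_p(B,B-\bar C(u,\theta'))>D_p(B,B-\bar C(u,\theta))$, so $\alpha_i>\alpha_{i+1}$ for all $i$.

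The main obstacle is exactly this strict-monotonicity point: with $r_i=\alpha_i/2$ the balls $\cB_p(K_i,r_i)$ and $\cB_p(K_{i+1},r_{i+1})$ are only "tangent" if $\alpha_i=\alpha_{i+1}$, so disjointness genuinely relies on the strict decrease rather than on the weaker statement recorded in \Cref{prop:hard-instance}(III). Everything else is a one-line application of the triangle inequality and the nestedness bound $D_p(K_i,K_j)\ge\alpha_i$.
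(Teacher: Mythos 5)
Your argument is correct and is essentially the same as the paper's: in both, the content is the chain $r_i+r_j=\tfrac{\alpha_i+\alpha_j}{2}<\alpha_i=D_p(K_i,K_{i+1})\le D_p(K_i,K_j)$ followed by the triangle inequality, just written by you in contrapositive form. The one thing you add, correctly, is an explicit justification that the $\alpha_i$ are \emph{strictly} decreasing; the paper simply invokes \Cref{prop:hard-instance}(III), which records only ``decreasing,'' yet strictness is genuinely needed here, because $\cB_p(K,r)$ is a \emph{closed} ball and $D_p(K_i,K_{i+1})=\alpha_i$, so equality $\alpha_i=\alpha_{i+1}$ would give $r_i+r_{i+1}=D_p(K_i,K_{i+1})$ and the triangle inequality would no longer exclude a common boundary set. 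Your derivation of strictness, from the explicit strictly decreasing $\theta_i$ and the strict monotonicity of $\theta\mapsto D_p(B,B-\bar C(u,\theta))$ (via a positive-measure annulus of directions where the support functions differ), is sound and fills a small gap the paper leaves implicit.
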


\begin{proof}
	Again since the distances $\alpha_i$'s are decreasing over $i$ (\Cref{prop:hard-instance}(III)), for all $i<j$ we have 
$$r_i + r_j ~=~ \frac{\alpha_i}{2} + \frac{\alpha_j}{2} ~<~ \alpha_i ~=~ D_p(K_i, K_{i+1})~\le~ D_p(K_i, K_j).$$ Then by the triangle inequality the balls $\cB_p(K_i, r_i)$ and $\cB_p(K_j,r_j)$ are disjoint.  
 \end{proof}

\begin{cor}
The function $s$ is an extension of $\bar{s}$, i.e. $s(K_i) = \bar{s}(K_i)$ for all $i$.
\end{cor}
\begin{proof}
Since the support of $f_i$ is $\cB_p(K_i,r_i)$, the previous lemma gives that the functions $f_i$' have disjoint support. Since we also have $f_i(K_i) = 1$ (and hence $f_j(K_i) = 0$ for all $j \neq i$) we get from the definition of $s$ $$s(K_i) \,=\, \st(K) + (\bar{s}(K_i) - \st(K_i)) \,=\, \bar{s}(K_i),$$ as desired.
 \end{proof}	

Next, we show that $s$ is a selector. For that we will need the following comparison between the $D_p$ and the $D_{\infty}$ metrics.

	\begin{fact}{\cite[Corollary 2]{vitale1985lp}} \label{fact:compDp}
		Consider $d \ge 2$ and $p \in [0,\infty)$. Then for any sets $P,Q \in \mathcal{K}^d$ we have 
		\begin{align*}
			D_{\infty}(P,Q)^{1 + \frac{d-1}{p}} \,\le\, c(p,d) \cdot (\textrm{diam}(Q) + 2 D_\infty(P,Q))^{\frac{d-1}{p}} \cdot D_p(P,Q),
		\end{align*}
		where $\textrm{diam}(Q) = \max_{x,y \in Q} \|x-y\|$ is the diameter of $Q$ and $c(p,d)$ is a function of $p$ and $d$ only. 
	\end{fact}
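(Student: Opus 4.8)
The plan is to give a self-contained proof of \Cref{fact:compDp}, reproving \cite[Corollary 2]{vitale1985lp}. Everything reduces to a statement about scalar functions on the sphere. Put $g := h_P - h_Q$, which is continuous on $S^{d-1}$, so that $\rho := D_\infty(P,Q) = \max_{S^{d-1}}|g|$ and $D_p(P,Q) = \big(\int_{S^{d-1}}|g|^p\,d\sigma\big)^{1/p}$. If $\rho = 0$ the left-hand side of the claimed inequality vanishes, so assume $\rho>0$. The principle is that $g$ is Lipschitz on $S^{d-1}$ with a constant controlled by $\diam(Q)$ and $\rho$; a Lipschitz function that attains absolute value $\rho$ at some point must have absolute value at least $\rho/2$ on an entire spherical cap around that point, and integrating $|g|^p$ over that cap already forces $D_p(P,Q)$ to be as large as required.

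First I would establish the Lipschitz estimate $|g(u)-g(v)| \le M\,\|u-v\|$ for all $u,v\in S^{d-1}$, with $M := \diam(Q)+\rho$. This is the one step that needs care: the naive bound $\mathrm{Lip}(g)\le\mathrm{Lip}(h_P)+\mathrm{Lip}(h_Q)$ is translation-sensitive and too lossy. Instead, fix maximizers $p_u\in\argmax_{x\in P}\langle x,u\rangle$ and $q_v\in\argmax_{x\in Q}\langle x,v\rangle$; then $h_P(v)-h_P(u)\ge\langle p_u,v-u\rangle$ and $h_Q(v)-h_Q(u)\le\langle q_v,v-u\rangle$, so $g(v)-g(u)\ge\langle p_u-q_v,v-u\rangle\ge-\|p_u-q_v\|\,\|u-v\|$. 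Since $D_\infty(P,Q)=\rho$ gives $h_P\le h_Q+\rho$ and hence $P\subseteq Q+\rho B$, the point $p_u$ lies within $\rho$ of $Q$, so $\|p_u-q_v\|\le\diam(Q)+\rho=M$. The reverse bound follows by interchanging $u$ and $v$ (equivalently, using $p_v$ and $q_u$), giving $\mathrm{Lip}(g)\le M$. Note $M = \diam(Q)+D_\infty(P,Q)\le\diam(Q)+2D_\infty(P,Q)$, which is the quantity in the statement.

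Next, let $u_0\in S^{d-1}$ attain $|g(u_0)|=\rho$ and put $\phi:=\frac{\rho}{2M}$; since $M\ge\rho$ we have $\phi\le\frac12<\frac{\pi}{2}$. By the Lipschitz bound, $|g(u)|\ge\rho-M\phi=\rho/2$ whenever $\|u-u_0\|\le\phi$, and since the chord $\|u-u_0\|$ is at most the angle $\ang(u,u_0)$, this holds in particular on the cap $C(u_0,\phi)$. A standard estimate for the normalized cap area — writing $\sigma(C(u_0,\phi)) = \big(\int_0^\phi(\sin t)^{d-2}\,dt\big)/\big(\int_0^\pi(\sin t)^{d-2}\,dt\big)$ via \Cref{fact:prob-bd-2}, and using $\sin t\ge\frac{2}{\pi}t$ on $[0,\frac{\pi}{2}]$ in the numerator and $(\sin t)^{d-2}\le1$ in the denominator — yields $\sigma(C(u_0,\phi))\ge c_1(d)\,\phi^{d-1}$ for a dimensional constant $c_1(d)>0$. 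Hence
\[
D_p(P,Q)^p \;=\; \int_{S^{d-1}}|g|^p\,d\sigma \;\ge\; \Big(\tfrac{\rho}{2}\Big)^p\sigma\big(C(u_0,\phi)\big) \;\ge\; \frac{c_1(d)}{2^p}\,\rho^p\Big(\frac{\rho}{2M}\Big)^{d-1} \;=\; \frac{c_1(d)}{2^{\,p+d-1}}\cdot\frac{\rho^{\,p+d-1}}{M^{\,d-1}} .
\]
Rearranging gives $\rho^{p+d-1}\le\frac{2^{p+d-1}}{c_1(d)}\,M^{d-1}\,D_p(P,Q)^p$, and taking $p$-th roots gives $\rho^{1+(d-1)/p}\le c(p,d)\,M^{(d-1)/p}\,D_p(P,Q)$ with $c(p,d):=(2^{p+d-1}/c_1(d))^{1/p}$. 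Substituting $M\le\diam(Q)+2D_\infty(P,Q)$ yields precisely the stated inequality (enlarging $c(p,d)$ if one insists on this exact form).

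I expect the Lipschitz estimate — obtaining the constant $\diam(Q)+\rho$ through the cross-maximizer argument, rather than a bound that degrades under translation or that only sees the circumradii of $P$ and $Q$ — to be the only genuinely delicate point; the cap-area bound and the concluding algebra are routine. The only case needing separate mention is $\rho=0$ (handled at the outset), since for $\rho>0$ and $p>0$ the argument is uniform in $p$.
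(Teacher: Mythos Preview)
The paper does not give a proof of this fact; it simply cites \cite[Corollary~2]{vitale1985lp} and uses the result as a black box. Your proposal therefore supplies strictly more than the paper does, and the argument you outline is correct: the Lipschitz bound $\mathrm{Lip}(h_P-h_Q)\le \diam(Q)+D_\infty(P,Q)$ via the cross-maximizer computation, followed by integrating the lower bound $|g|\ge \rho/2$ over a cap of angular radius $\rho/(2M)$, is exactly the mechanism behind Vitale's inequality. One cosmetic remark: since $M=\diam(Q)+\rho\le\diam(Q)+2\rho$ and $M$ appears with a positive exponent on the right-hand side, the substitution does not require enlarging $c(p,d)$, so your parenthetical caveat is unnecessary.
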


\begin{lemma}
The function $s$ is a selector, i.e. $s(K)\in K$ for all $K$.
\end{lemma}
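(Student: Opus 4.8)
The plan is to show $s(K) \in K$ by splitting into two cases depending on whether $K$ lies inside one of the (disjoint) balls $\cB_p(K_i, r_i)$ or not. If $K$ is outside every such ball, then all $f_i(K) = 0$, so $s(K) = \st(K) \in K$ since the Steiner point is a selector. So assume $K \in \cB_p(K_i, r_i)$ for the unique index $i$ (uniqueness by \Cref{lem:disjoint-balls}); then $s(K) = (1 - f_i(K))\st(K) + f_i(K)\bar{s}(K_i)$ is a convex combination of $\st(K)$ and $\bar{s}(K_i)$. Since $\st(K) \in K$ and $K$ is convex, it suffices to prove $\bar{s}(K_i) \in K$. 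This is where the ``$\bar{s}(K_i)$ deep inside $K_i$'' idea is used: by \Cref{lem:contains-ball}, $K$ contains the ball $B(0, 1 - D_\infty(K, B))$, so if I can show $\|\bar{s}(K_i)\| \le 1 - D_\infty(K,B)$, I am done.

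The key estimate, then, is to bound $D_\infty(K, B)$ from above. By the triangle inequality, $D_\infty(K, B) \le D_\infty(K, K_i) + D_\infty(K_i, B) \le D_\infty(K, K_i) + \eps$ using \Cref{prop:hard-instance}(I). To control $D_\infty(K, K_i)$ in terms of the given $D_p(K, K_i) \le r_i = \alpha_i/2$, I invoke \Cref{fact:compDp} with $P = K$, $Q = K_i$: since $K_i \subseteq B$ we have $\diam(K_i) \le 2$, and $D_\infty(K, K_i) \le D_\infty(K,B) + D_\infty(B, K_i) \le 2 + 2 = \dots$ — more carefully, one gets a crude absolute bound like $D_\infty(K, K_i) \le 3$ (since $K \subseteq B$ too, because $\cK$-sets here are inside $B$ — actually I should double-check this; $K$ ranges over all of $\cK^d$, so I may instead need to argue that if $D_p(K, K_i)$ is small and $d$ is large enough then $K$ is genuinely close to $K_i$ in $D_\infty$, using the fact that the exponent $\frac{d-1}{p}$ factor in \Cref{fact:compDp} means small $D_p$ forces small $D_\infty$ up to the diameter term). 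Plugging in, $D_\infty(K, K_i)^{1 + (d-1)/p} \le c(p,d)\cdot(\diam(K_i) + 2D_\infty(K,K_i))^{(d-1)/p}\cdot D_p(K, K_i)$; bounding $\diam(K_i) + 2D_\infty(K, K_i)$ by an absolute constant (valid once we know $D_\infty(K, K_i)$ is bounded, e.g. bootstrapping or noting $K, K_i$ both near $B$), this yields $D_\infty(K, K_i) \le C(p,d)\cdot D_p(K, K_i)^{p/(p + d - 1)} \le C(p,d)\cdot (\alpha_i/2)^{p/(p+d-1)}$.

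Now I use that $\alpha_i = D_p(K_i, K_{i+1}) \le 2\eps$ (shown in the proof of \Cref{prop:partial-selector}), so $D_\infty(K, K_i) \le C(p,d)\cdot \eps^{p/(p+d-1)}$, which I can make as small as I like by choosing $\eps$ small (here is where ``$\eps$ sufficiently small as a function of $d, p$'' gets used). Meanwhile $\|\bar{s}(K_i)\| = \eta\alpha_i = \frac{\eps}{4}\alpha_i \le \frac{\eps}{4}\cdot 2\eps = \frac{\eps^2}{2}$, which is also tiny. So $\|\bar{s}(K_i)\| + D_\infty(K, K_i) + \eps \le \frac{\eps^2}{2} + C(p,d)\eps^{p/(p+d-1)} + \eps < 1$ for $\eps$ small enough, giving $\|\bar{s}(K_i)\| \le 1 - D_\infty(K, B)$ as needed.

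The main obstacle I anticipate is making the application of \Cref{fact:compDp} fully rigorous: the inequality there involves $D_\infty(K, K_i)$ appearing on \emph{both} sides (through the $\diam(Q) + 2D_\infty(P,Q)$ term), so I need either a preliminary crude bound on $D_\infty(K, K_i)$ to substitute into the right-hand side, or a short bootstrapping argument (if $D_\infty(K, K_i)$ were large, the inequality forces $D_p$ to be large, contradicting $D_p(K, K_i) \le r_i \le \eps$). A clean way is: from $D_p(K, K_i) \le 2$ and the fact, first derive $D_\infty(K, K_i) \le $ (some absolute constant), then re-substitute to get the refined bound $D_\infty(K, K_i) \le C(p,d) D_p(K, K_i)^{p/(p+d-1)}$. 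Everything else is routine: one convexity argument, one application each of \Cref{lem:disjoint-balls}, \Cref{lem:contains-ball}, \Cref{prop:hard-instance}(I), and the numerical bounds on $\alpha_i$ and $\|\bar s(K_i)\|$ already established.
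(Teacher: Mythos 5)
Your proposal is correct and follows essentially the same approach as the paper: split on whether $K$ lies in some $\cB_p(K_i,r_i)$, reduce to showing $\bar s(K_i)\in K$ via convexity, bound $D_\infty(K,B)$ using \Cref{fact:compDp} and the fact that $\|\bar s(K_i)\|$ is small, then invoke \Cref{lem:contains-ball}. The only cosmetic difference is that you apply \Cref{fact:compDp} with $Q=K_i$ and then use the triangle inequality, whereas the paper bounds $D_p(K,B)\le\tfrac{3}{2}\eps$ directly and applies the fact with $Q=B$; your explicit discussion of the self-referential $D_\infty$ term on the right-hand side of \Cref{fact:compDp} (and the bootstrapping needed to break it) fills in a detail the paper compresses into ``as long as $\eps$ is small enough.''
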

\begin{proof}
If $K$ is not in any $\cB_p(K_i,r_i)$, then $s(K) = \st(K) \in K$. Otherwise, suppose $K\in \cB_p(K_i,r_i)$ for some $i$. Again this implies that $f_j(K)=0$ for all $j\ne i$, hence 
\[s(K) = (1-f_i(K))\cdot \st(K) + f_i(K)\cdot \bar{s}(K_i).\] 
Furthermore, $f_i(K)\in [0,1]$, so $s(K)$ is a convex combination of $\st(K)$ and $\bar{s}(K_i)$. Since $K$ is convex, to obtain that $s(K) \in K$ it then suffices to show that $\bar{s}(K_i)\in K$. As noted in \Cref{prop:partial-selector}, $\alpha_i < 2\eps$. Using the bounds $\eta = \frac{\eps}{4}$ and $\eps<1$, this gives $\|\bar{s}(K_i)\| = \eta \alpha_i < \frac{1}{2}$. It then suffices to show that $K \supseteq B(0,\frac{1}{2})$. We have:
\begin{align*}
D_p(K, B) 
&\le D_p(K, K_i) + D_p(K_i, B) \tag{Triangle inequality}\\
&\le r_i + D_p(K_i, B)\\
&= \tfrac{1}{2}D_p(K_i, K_{i+1}) + D_p(K_i, B) \\
&\le \tfrac{3}{2}D_p(K_{i+1}, B) \tag{$K_{i+1}\subseteq K_i\subseteq B$}\\
&\le \tfrac{3}{2}D_\infty(K_{i+1}, B)\\
&\le \tfrac{3}{2}\eps \tag{\Cref{prop:hard-instance}(I)}.
\end{align*}
Together with \Cref{fact:compDp} we see that as long as $\e$ is small enough (as we have assumed) this implies that $D_{\infty}(K_{i+1}, B) \le \frac{1}{2}$.
But then Lemma~\ref{lem:contains-ball} implies that $K$ contains the ball $B(0,\frac{1}{2})$, which concludes the proof.
 \end{proof}

Having shown that $s$ is a selector and an extension of $\bar{s}$, we now turn to showing that it is Lipschitz with respect to $D_p$.

\begin{lemma}\label{lem:f-lipschitz}
For each $i$, the function $f_i$ is $\frac{1}{r_i}$-Lipschitz with respect to $D_p$.
\end{lemma}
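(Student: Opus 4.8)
The statement is that $f_i(K) = \max\{1 - D_p(K,K_i)/r_i, \, 0\}$ is $\frac{1}{r_i}$-Lipschitz with respect to $D_p$. This is a completely routine fact about bump functions built from a metric: the map $K \mapsto D_p(K,K_i)$ is $1$-Lipschitz by the triangle inequality, dividing by $r_i$ scales the Lipschitz constant to $\frac{1}{r_i}$, negating and adding a constant preserves it, and taking the pointwise maximum with the constant function $0$ (which is $0$-Lipschitz) preserves it. So the proof is three lines.

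Concretely, first I would note that for any $K, K' \in \cK$, the triangle inequality for $D_p$ gives $D_p(K,K_i) \le D_p(K,K') + D_p(K',K_i)$ and symmetrically, hence $|D_p(K,K_i) - D_p(K',K_i)| \le D_p(K,K')$; thus $K \mapsto \frac{1}{r_i}D_p(K,K_i)$ is $\frac{1}{r_i}$-Lipschitz. Then I would invoke the elementary fact that $x \mapsto \max\{1 - x, 0\}$ is $1$-Lipschitz on $\R$ (it is the composition of the $1$-Lipschitz affine map $x \mapsto 1-x$ with the $1$-Lipschitz map $t \mapsto \max\{t,0\}$), and that composing a $1$-Lipschitz real function with an $L$-Lipschitz function is $L$-Lipschitz. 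Combining, $f_i$ is $\frac{1}{r_i}$-Lipschitz with respect to $D_p$.

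There is no real obstacle here; the only thing to be slightly careful about is not over-claiming — the constant is exactly $\frac{1}{r_i}$, matching what is needed downstream in the Lipschitz estimate for $s$, so I would make sure the bound is stated without slack. I would write it as follows.

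\begin{proof}
First, for any $K, K' \in \cK$, the triangle inequality for $D_p$ yields
\[
D_p(K, K_i) \le D_p(K, K') + D_p(K', K_i) \qquad \text{and} \qquad D_p(K', K_i) \le D_p(K', K) + D_p(K, K_i),
\]
so $|D_p(K, K_i) - D_p(K', K_i)| \le D_p(K, K')$. Hence the map $K \mapsto 1 - \tfrac{1}{r_i} D_p(K, K_i)$ satisfies
\[
\left| \Big(1 - \tfrac{1}{r_i} D_p(K, K_i)\Big) - \Big(1 - \tfrac{1}{r_i} D_p(K', K_i)\Big) \right| = \tfrac{1}{r_i}\big|D_p(K, K_i) - D_p(K', K_i)\big| \le \tfrac{1}{r_i} D_p(K, K').
\]
Finally, the function $t \mapsto \max\{t, 0\}$ is $1$-Lipschitz on $\R$, so composing it with the above gives
\[
|f_i(K) - f_i(K')| \le \tfrac{1}{r_i} D_p(K, K'),
\]
i.e. $f_i$ is $\tfrac{1}{r_i}$-Lipschitz with respect to $D_p$.
\end{proof}
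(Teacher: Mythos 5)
Your proof is correct, and it rests on exactly the same ingredient the paper uses: the triangle inequality for $D_p$, which shows $K \mapsto D_p(K,K_i)$ is $1$-Lipschitz. The paper organizes the argument as a two-case analysis (depending on whether $D_p(P,K_i) \ge r_i$, i.e.\ whether $f_i(P)=0$), whereas you factor it cleanly through the standard fact that composing with the $1$-Lipschitz map $t \mapsto \max\{1-t/r_i,0\}$ preserves Lipschitz constants up to the scaling $1/r_i$; this is a slightly more streamlined presentation of the same idea.
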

\begin{proof}
Fix $P,Q$, and assume without loss of generality that $D_p(P,K_i) \ge D_p(Q,K_i)$. If $D_p(P,K_i) \ge r_i$, then
\begin{align*}
|f_i(Q) - f_i(P)| 
	= f_i(Q)
	&\le 1 - \frac{D_p(Q,K_i)}{r_i}\\
	&\le 1 - \frac{D_p(P,K_i) - D_p(P,Q)}{r_i} \tag{triangle inequality}\\
	&\le 1 - \frac{r_i - D_p(P,Q)}{r_i}\\
	&= \frac{D_p(P,Q)}{r_i}.
\end{align*}
Otherwise, 
	\begin{align*}
	|f_i(Q) - f_i(P)| 
	~&=~ \left(1 - \frac{D_p(Q,K_i)}{r_i}\right) - \left(1 - \frac{D_p(P,K_i)}{r_i}\right)\\
	~&=~  \frac{D_p(P,K_i) - D_p(Q,K_i)}{r_i}\\
	~&\le~ \frac{D_p(P,Q)}{r_i},
	\end{align*}
where the last step follows by the triangle inequality.
 \end{proof}
	
\begin{lemma}[$s$ is Lipschitz] \label{lem:lipschitz}
For all $P,Q$, 
\[ \|s(P) - s(Q)\| ~\le~ 
	(5d + \e) \cdot D_p(P,Q).\]
\end{lemma}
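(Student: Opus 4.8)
The plan is to decompose $s = \st + \sum_i w_i$, where $w_i(K) := f_i(K)\,(\bar{s}(K_i) - \st(K))$, and bound
$\|s(P)-s(Q)\| \le \|\st(P)-\st(Q)\| + \sum_i \|w_i(P) - w_i(Q)\|$.
Since $\st$ is $d$-Lipschitz with respect to $D_p$, the first term contributes $d\,D_p(P,Q)$. The crux is to show that each $w_i$ is $(2d+\tfrac{\e}{2})$-Lipschitz and, crucially, that $w_i(P) = w_i(Q) = 0$ unless at least one of $P,Q$ lies in $\cB_p(K_i,r_i)$. Given \Cref{lem:disjoint-balls}, $P$ lies in at most one such ball and $Q$ in at most one, so at most two indices $i$ contribute a nonzero term, and adding up yields $\|s(P)-s(Q)\| \le \big(d + 2(2d+\tfrac{\e}{2})\big)\,D_p(P,Q) = (5d+\e)\,D_p(P,Q)$.

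For the per-bump bound I would first record the key observation that $\st(K_i) = 0$ for every $i$: since $K_i = -K_i$ is centrally symmetric by construction, definition~(\ref{def:st-2}) gives $\st(-K) = -\st(K)$, hence $\st(K_i) = -\st(K_i)$. Consequently $\|\bar{s}(K_i) - \st(K_i)\| = \|\bar{s}(K_i)\| = \eta\alpha_i = \tfrac{\e}{2}\,r_i$ (using $\eta = \tfrac{\e}{4}$ and $\alpha_i = 2r_i$). This is where central symmetry is essential: it makes the ``height'' of the interpolation target comparable to the bump radius $r_i$, rather than merely $O(\e)$, and $r_i$ can be far smaller than $\e$. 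Next, fix $i$ and relabel so that $D_p(P,K_i) \le D_p(Q,K_i)$; if $w_i(P)-w_i(Q)\ne 0$ then $f_i(P)>0$, so $P\in\cB_p(K_i,r_i)$ and $\|\st(P) - \st(K_i)\| \le d\,D_p(P,K_i) \le d r_i$, whence $\|\bar{s}(K_i) - \st(P)\| \le (d+\tfrac{\e}{2})r_i$. Writing
\[
w_i(P)-w_i(Q) = \big(f_i(P)-f_i(Q)\big)\big(\bar{s}(K_i)-\st(P)\big) - f_i(Q)\big(\st(P)-\st(Q)\big),
\]
and using $|f_i(P)-f_i(Q)|\le D_p(P,Q)/r_i$ (\Cref{lem:f-lipschitz}), $f_i(Q)\le 1$, and $\|\st(P)-\st(Q)\|\le d\,D_p(P,Q)$, the $r_i$ cancels and we get $\|w_i(P)-w_i(Q)\| \le (d+\tfrac{\e}{2})\,D_p(P,Q) + d\,D_p(P,Q) = (2d+\tfrac{\e}{2})\,D_p(P,Q)$, as claimed.

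I expect the main obstacle to be purely bookkeeping rather than conceptual: one must check that the factor $r_i$ in the bound on $\|\bar{s}(K_i)-\st(P)\|$ pairs exactly with the factor $1/r_i$ in the Lipschitz bound on $f_i$, and that support-disjointness (\Cref{lem:disjoint-balls}) collapses the a priori infinite sum $\sum_i \|w_i(P)-w_i(Q)\|$ to at most two terms. The only ingredient genuinely beyond the lemmas already established is the identity $\st(K_i) = 0$; everything else is the Lipschitzness of $\st$, \Cref{lem:f-lipschitz}, and \Cref{lem:disjoint-balls}.
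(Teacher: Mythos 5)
Your proof is correct and follows essentially the same route as the paper: the decomposition $s = \st + \sum_i g_i$, the per-bump bound exploiting $\st(K_i)=0$ (central symmetry), the cancellation of $r_i$ against the $1/r_i$ Lipschitz constant of $f_i$, the identity $\|\bar s(K_i)\| = 2\eta r_i$, and support disjointness to reduce the infinite sum to at most two terms. The only difference is cosmetic: you split the per-bump difference $g_i(P)-g_i(Q)$ into two terms by factoring out $\bar s(K_i)-\st(P)$, whereas the paper splits it into three (isolating $\bar s(K_i)$, $\st(P)-\st(Q)$, and $\st(Q)$ separately); both regroupings yield the same $(2d+\tfrac{\e}{2})$-Lipschitz bound per bump and hence the same $(5d+\e)$ overall.
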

	
\begin{proof}
Recall that $\st(\cdot)$ is $d$-Lipschitz with respect to $D_p$ (see the note following definition~(\ref{def:st-2})). We first show that for each $i$, the function $g_i:= f_i(K)\cdot (\bar{s}(K_i) - \st(K))$ is ($2d+2\eta$)-Lipschitz with respect to $D_p$. Fix an $i$ and sets $P,Q\in \cK$ and suppose w.l.o.g.\ $D_p(Q,K_i) \le D_p(P,K_i)$. 
We have:
\begin{align*}
g_i(P) - g_i(Q)
	&= \big(f_i(P) - f_i(Q)\big)\cdot \bar{s}(K_i) + f_i(P) \cdot \st(P) - f_i(Q)\cdot \st(Q)\\
	&=\underbrace{\big(f_i(P) - f_i(Q)\big)\cdot \bar{s}(K_i)}_A 	+ \underbrace{f_i(P)\cdot \big(\st(P)-\st(Q) \big)}_B 
	\\
	&\hspace{0.5cm}+  \underbrace{\big(f_i(P)-f_i(Q)\big)\cdot \st(Q)}_C.
\end{align*}
We bound these terms separately. 

\paragraph{Term A.} Since $f_i$ is $\frac{1}{r_i}$-Lipschitz by Lemma~\ref{lem:f-lipschitz}, we have
\begin{align*}
\left \|\big(f_i(P) - f_i(Q)\big)\cdot \bar{s}(K_i)\right\|
	&\le \frac{D_p(P,Q)}{r_i} \cdot \|\bar{s}(K_i)\|\\
	&\le 2\eta \cdot D_p(P,Q) .
		\tag{since $\|\bar{s}(K_i)\| = \eta \alpha_i = 2\eta r_i$}
\end{align*}

\paragraph{Term B.}
Since $f_i(P)\le 1$, this term is at most $d\cdot D_p(P,Q)$ by Lipschitzness of $\st(\cdot)$.

\paragraph{Term C.}
If $D_p(Q,K_i) \ge r_i$, then $f_i(Q) = f_i(P) = 0$ and we are done. Otherwise, since $K_i$ is centrally symmetric, $\st(K_i) = 0$. Therefore we have 
\begin{align*}
\|\st(Q)\| 
	&= \|\st(Q) - \st(K_i)\|\\
	&\le d\cdot D_p(Q, K_i) \tag{Lipschitzness of $\st(\cdot)$}\\ 
	&\le d \cdot r_i.
\end{align*}
Using the $\frac{1}{r_i}$-Lipschitzness of $f_i$:
\[ 
\left \|\big(f_i(P) - f_i(Q)\big)\cdot \st(Q)\right\|
	~\le~ \frac{D_p(P,Q)}{r_i} \cdot \|\st(Q)\|
	~\le~ d\cdot D_p(P,Q).
\]
These three bounds together show that each $g_i$ is $(2d+2\eta)$-Lipschitz. 

We now show that $g:= \sum_i g_i$ is $(4d+4\eta)$-Lipschitz. Fix sets $P,Q$. Recall that the support of $g_i$ is $\cB_p(K_i,r_i)$, and these sets are disjoint. In particular, there are $i$ and $j$ such that $g(P) = g_i(P)$ and $g(Q) = g_j(Q)$. 
If $i=j$, then $\|g(P) - g(Q)\| \le (2d+2\eta) D_p(P,Q)$ since $g_i$ is $(2d+2\eta)$-Lipschitz. 
Otherwise, we have $g(P) = g_i(P) + g_j(P)$ and $g(Q) = g_i(Q)+g_j(Q)$. The function $g_i+g_j$ is $(4d+4\eta)$-Lipschitz, hence $\|g(P) - g(Q)\| \le (4d+4\eta) D_p(P,Q)$. 
Finally, since $\st(\cdot)$ is $d$-Lipschitz, it follows that $s = \st + g$ is $(5d+4\eta)$-Lipschitz. The proof follows by the choice of $\eta = \frac{\e}{4}$.
 \end{proof}


\subsection*{Acknowledgments}
We thank Boris Bukh for suggesting the question. 

{\small
\bibliographystyle{alpha}
\bibliography{lipschitz-chasing}
}


\newpage
\appendix

\section{Omitted proofs}

\subsection{Steiner Selector is Lipschitz for $D_1$} \label{app:SteinerD1}

\begin{fact}
	The Steiner selector $\st : \cK^d \rightarrow \R^d$ is $d$-Lipschitz with respect to the metric $D_1$.
\end{fact}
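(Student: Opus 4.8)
The plan is to work directly from the barycentric formula~(\ref{def:st-2}), namely $\st(K) = d\int_{S^{d-1}} y\, h_K(y)\, d\sigma(y)$, and simply push the norm inside the integral. First I would note that this is nothing more than the computation already sketched in the remark following~(\ref{def:st-2}); the Fact merely records it for completeness. One preliminary point: for any $K \in \cK^d$ the support function $h_K$ is continuous (being a maximum of linear functionals over a compact set), hence bounded on the compact sphere $S^{d-1}$, so the vector-valued integral defining $\st(K)$ converges; the same applies to $h_K - h_{K'}$.

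The core of the argument is a three-step chain. By linearity of the integral,
\[
\st(K) - \st(K') \;=\; d\int_{S^{d-1}} y\,\bigl(h_K(y) - h_{K'}(y)\bigr)\, d\sigma(y).
\]
Applying the triangle inequality for integrals (equivalently, Jensen's inequality for the convex function $\|\cdot\|$ against the probability measure $\sigma$, i.e.\ the estimate $\|\int f\,d\sigma\| \le \int \|f\|\,d\sigma$),
\[
\bigl\|\st(K) - \st(K')\bigr\| \;\le\; d\int_{S^{d-1}} \bigl\| y\,\bigl(h_K(y) - h_{K'}(y)\bigr)\bigr\|\, d\sigma(y) \;=\; d\int_{S^{d-1}} |h_K(y) - h_{K'}(y)|\cdot\|y\|\, d\sigma(y).
\]
Since $\|y\| = 1$ for every $y \in S^{d-1}$, the right-hand side equals $d\int_{S^{d-1}} |h_K(y) - h_{K'}(y)|\, d\sigma(y) = d\cdot D_1(K,K')$, which is exactly the claimed bound.

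There is no real obstacle here — the statement is a two-line consequence of the definition, and no appeal to formula~(\ref{def:st-1}) or to any finer property of $D_p$ for $p>1$ is required. The only things deserving a sentence of care are (i) the well-definedness of the integrals, handled above by boundedness of support functions on $S^{d-1}$, and (ii) the justification that $\|\cdot\|$ may be moved inside the integral, which is precisely Jensen's inequality for the normalized measure $\sigma$. I would write the proof essentially as the displayed chain above, with these two remarks inserted.
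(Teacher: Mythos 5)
Your proof is correct and is essentially identical to the paper's: both start from the barycentric formula~(\ref{def:st-2}), use linearity to write $\st(K)-\st(K')$ as a single integral, push the norm inside via Jensen's/the integral triangle inequality, and finish using $\|y\|=1$. The extra remark about well-definedness is a harmless addition.
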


\begin{proof}
	For any two sets $A,B \in \cK^d$, by the definition of Steiner point we have
	\begin{align*}
    \|\st(A)-\st(B)\| &= \left\| d\cdot \int_{y\in S^{d-1}} y \cdot h_{A}(y) \ d\sigma(y)~-~ d\cdot \int_{y\in S^{d-1}} y \cdot h_{B}(y) \ d\sigma(y)\right\|\\
	&\le d\cdot \int_{y\in S^{d-1}} \Big\|(h_{A}(y)- h_{B}(y))\cdot y\Big\|\ d\sigma(y) \tag{Jensen's}\\
	&= d \int_{y\in S^{d-1}} \Big|h_A(y)- h_B(y)\Big|\ d\sigma(y) \tag{$\|y\|=1$}\\
	&= d\cdot  D_1(A,B),
\end{align*}
which concludes the proof. 
\end{proof}

\subsection{Being Lipschitz for $D_1$ Implies Competitiveness} \label{app:D1}

\begin{fact}
If $s:\cK^d \to \R^d$ is a $L$-Lipschitz selector with respect to $D_1$, then $s$ is an $L$-competitive selector.
\end{fact}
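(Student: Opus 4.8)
The claim is that if a selector $s$ is $L$-Lipschitz with respect to $D_1$, then for every nested sequence $B \supseteq K_1 \supseteq K_2 \supseteq \cdots$ the total movement $\sum_{t} \|s(K_{t+1}) - s(K_t)\|$ is at most $L$ (or $O(L)$). The plan is to telescope: by Lipschitzness, $\|s(K_{t+1}) - s(K_t)\| \le L \cdot D_1(K_t, K_{t+1})$, so it suffices to show that $\sum_t D_1(K_t, K_{t+1}) \le 1$ for any nested sequence inside $B$. This is where the specific structure of $D_1$ — an $L_1$ integral of support functions — together with nestedness does all the work.

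\textbf{Key steps.} First I would use nestedness: since $K_{t+1} \subseteq K_t$, we have $h_{K_{t+1}}(y) \le h_{K_t}(y)$ for every direction $y \in S^{d-1}$, so $|h_{K_t}(y) - h_{K_{t+1}}(y)| = h_{K_t}(y) - h_{K_{t+1}}(y)$, and the absolute values disappear. Then
\begin{align*}
\sum_{t=1}^{\infty} D_1(K_t, K_{t+1}) &= \sum_{t=1}^{\infty} \int_{S^{d-1}} \big(h_{K_t}(y) - h_{K_{t+1}}(y)\big)\, d\sigma(y)\\
&= \int_{S^{d-1}} \sum_{t=1}^{\infty} \big(h_{K_t}(y) - h_{K_{t+1}}(y)\big)\, d\sigma(y),
\end{align*}
where the interchange of sum and integral is justified by the Tonelli/monotone convergence theorem (all integrands are nonnegative). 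The inner sum telescopes to $h_{K_1}(y) - \lim_{t\to\infty} h_{K_t}(y) \le h_{K_1}(y) \le h_B(y) = 1$, using $K_1 \subseteq B$. Integrating against the normalized measure $\sigma$ gives $\sum_t D_1(K_t, K_{t+1}) \le \sigma(S^{d-1}) = 1$. Combining with the Lipschitz bound yields $\sum_t \|s(K_{t+1}) - s(K_t)\| \le L$, which is exactly the definition of $L$-competitiveness.

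\textbf{Main obstacle.} There is no serious obstacle here; the only point requiring a little care is the exchange of the infinite sum and the integral, which is handled by nonnegativity (Tonelli), and the observation that $\lim_t h_{K_t}(y)$ exists because the sequence $h_{K_t}(y)$ is decreasing and bounded below (by $h_{K}(y)$ for $K = \bigcap_t K_t$, or simply by $-1$ since all sets lie in $B$). One should also note that the competitive-selector definition in the paper already restricts to sequences with $K_1 \subseteq B$, so no preliminary rescaling reduction is needed; if one wanted the statement for arbitrary nested sequences, a scaling argument (replacing $K_t$ by $K_t/R$ where $R = \operatorname{diam}$ of the first set, which preserves the ratio) would reduce to this case.
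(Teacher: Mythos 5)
Your approach is the same as the paper's: drop absolute values by nestedness, swap sum and integral by Tonelli, telescope, and bound the result by $1$. But there is a small error in the pointwise bound. You write that the telescoped sum satisfies $h_{K_1}(y) - \lim_{t\to\infty} h_{K_t}(y) \le h_{K_1}(y)$, which silently assumes $\lim_t h_{K_t}(y) \ge 0$ for every direction $y$. That is false in general: a nonempty compact convex set can have negative support function in some directions (e.g.\ the singleton $\{\tfrac12 e_1\}$ has $h(-e_1) = -\tfrac12$), and indeed you yourself note only the weaker lower bound $h_{K_t}(y) \ge -1$. Using that weaker bound pointwise would give a constant of $2L$, not $L$.

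The fix, which is exactly what the paper does, is to \emph{integrate before discarding the limit term}: write
\[
\int_{S^{d-1}} \Big(h_{K_1}(y) - \lim_{t\to\infty} h_{K_t}(y)\Big)\, d\sigma(y)
= \int_{S^{d-1}} h_{K_1}(y)\, d\sigma(y) - \int_{S^{d-1}} \lim_{t\to\infty} h_{K_t}(y)\, d\sigma(y),
\]
then bound the first integral by $1$ using $h_{K_1}\le h_B = 1$, and argue that the second integral is nonnegative. The paper justifies the latter via $h_{K_t}(y) + h_{K_t}(-y) \ge 0$ (the width is nonnegative); equivalently, pick any $x_0 \in \bigcap_t K_t$ (nonempty by compactness of the nested chain) and note $h_{K_t}(y) \ge \langle x_0, y\rangle$, whose spherical average is $0$. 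With that one-line correction your proof matches the paper's.
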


\begin{proof}
Let $B \supseteq K_1\supseteq K_2 \supseteq \dots$ be any nested sequence. Then since $s$ is $L$-Lipschitz with respect to $D_1$,
\begin{align*}
\sum_{t=1}^\infty \|s(K_t)-s(K_{t+1})\|	&\,\le\, \sum_{t=1}^\infty L\cdot  \int_{y\in S^{d-1}} \Big| h_{K_t}(y)- h_{K_{t+1}}(y)\Big|\ d\sigma(y) \\
	&= L\cdot \sum_{t=1}^\infty  \int_{y\in S^{d-1}} \Big(h_{K_t}(y)- h_{K_{t+1}}(y)\Big)\ d\sigma(y) \tag{$K_t\supseteq K_{t+1}$ so $h_{K_t} \ge h_{K_{t+1}}$}\\
	&= L \cdot \int_{y\in S^{d-1}} h_{K_1}(y) \ d\sigma(y) ~-~ d\cdot \lim_{t\to \infty} \int_{y \in S^{d-1}}  h_{K_t}(y)\ d\sigma(y)\\
	&\le L,
\end{align*}
where the last inequality follows from $h_{K_1}(y) \le 1$ and $h_{K_t}(y) + h_{K_t}(-y) \ge 0$. Hence the proof.
 \end{proof}

\subsection{No Lipschitz Selector Extension for $D_p$}

\NoDpExt*

\begin{proof}
Let $s$ be an arbitrary extension of $\hat{s}$. For $\theta \in [0, \frac{\pi}{2}]$, let $K_\theta := B - \bar{C}(e_1, \theta)$. We claim that 
\begin{equation}\label{eq:no-dp-ext}
\lim_{\theta \to 0^+} \frac{\|s(K_\theta) - s(B)\|}{D_p(K_\theta, B)} = \infty.
\end{equation}

First, we have $\|s(K_\theta) - s(B)\|\ge h_{B}(e_1) - h_{K_\theta}(e_1) = 1 - \cos(\theta) \ge \Omega(\theta^2)$ (the asymptotic $\Omega(\cdot)$ is as $\theta \rightarrow 0^+$). Also, for $y \in S^{d-1} - C(e_1, \theta)$ we have $h_B(y) = h_{K_\theta}(y)$, and for $y \in C(e_1,\theta)$ we have the bounds $0\le h_{K_\theta}(y) \le h_B(y) = 1$. In particular, $|h_{K_\theta}(y) - h_B(y)|^p \le 1$. Therefore,
\[ D_p(K_{y}, B) 
	= \left(\int_{y\in S^{d-1}}  |h_{K_\theta}(y) - h_B(y)|^p\ d\sigma(y)\right)^{\frac{1}{p}} 
	\le \big[\sigma(C(e_1,\theta))\big]^{\frac{1}{p}}
	\le O(\theta^{\frac{d-2}{p}}),
\]
where the last bound follows from Lemma~\ref{lem:area-cap} (again the asymptotic $O(\cdot)$ is with $\theta \rightarrow 0^+$). Choosing $d > 2p+2$ proves (\ref{eq:no-dp-ext}), and hence the fact.
 \end{proof}

\end{document}